\documentclass[english,a4paper,12pt]{article}
\usepackage[utf8]{inputenc}
\usepackage[T1]{fontenc}
\usepackage{titling}
\usepackage{amsmath, amssymb}
\usepackage{pgf}
\usepackage{amsfonts}
\usepackage{dsfont}
\usepackage{mathrsfs}
\usepackage{mathabx}
\usepackage{stmaryrd}
\usepackage{makeidx}
\usepackage{amsbsy}
\usepackage{amsthm}
\usepackage{color}
\usepackage{fullpage}
\usepackage{enumitem}
\usepackage[vcentermath]{youngtab}
\usepackage{marginnote} 
\usepackage{mdframed}	
\newmdenv[topline=false, bottomline=false, skipabove=\topsep, skipbelow=\topsep]{siderules}

\usepackage{tikz}
\usetikzlibrary{decorations.text,calc,arrows.meta}
\usepackage[vcentermath]{youngtab}
\usepackage{tkz-euclide}
\usetkzobj{all}
\usepackage{pgfplots}
\usetikzlibrary{shapes}
\usetikzlibrary{decorations.shapes}
\usetikzlibrary{positioning}


\def\ben{\begin{equation}}
\def\een{\end{equation}}
\def\bena{\begin{eqnarray}}
\def\eena{\end{eqnarray}}

\def\d{{\rm d}}
\def\V{{\cal V}}
\def\S{{\mathcal S}}

\def\S{\mathbf{S}}

\def\cP{{\mathcal P}}


\def\path{\operatorname{Pa}}

\def\b{\operatorname{b}}

\def\H{\mathcal{H}}

\newcommand{\gM}{{\mathfrak A}}
\newcommand{\gB}{{\mathfrak B}}



\newcommand{\myid}{{\bf 1}}

\renewcommand{\SS}{{\mathbb S}}

\newcommand{\RR}{{\mathbb R}}

\newcommand{\CC}{{\mathbb C}}
\newcommand{\ZZ}{{\mathbb Z}}

\def\D{{\mathcal D}}

\renewcommand{\b}{{\mathbf b}}

\newtheorem{thm}{Theorem}
\newtheorem{remark}{Remark}

\newtheorem{lemma}{Lemma}

\newtheorem{corollary}{Corollary}
\newtheorem{defn}{Definition}

\begin{document}
\title{Relative entropy for coherent states in chiral CFT}
\author{Stefan Hollands$^1$\thanks{\tt stefan.hollands@uni-leipzig.de} \\ \\
{\it $^{1}$ Institute for Theoretical Physics, University of Leipzig,}\\
{\it Br\"{u}derstra{\ss}e 16, D-04103 Leipzig, Germany.} 
	}
\date{15 May 2019}

\maketitle 

\begin{abstract}
We consider the relative entropy between the vacuum state and a state 
obtained by applying an exponentiated stress tensor to the vacuum of a chiral conformal field theory on the lightray.  
The smearing function of the stress tensor can be viewed as a vector field on the real line generating a diffeomorphism. 
We show that the relative entropy is equal to $c$ times the so-called Schwarzian action of the diffeomorphism. As an application 
of this result, we obtain a formula for the relative entropy between the vacuum and a solitonic state. 
\end{abstract}
\bigskip
\noindent {\small Keywords: conformal field theory, relative entropy, Schwarzian action.}

\section{Introduction}

Quantum information theoretic considerations in quantum field theory have attracted a lot of attention in recent years, not least due to intriguing relations 
with quantum field theory in curved spacetime or even (quantum) gravity theory, for instance through the ``quantum focussing conjecture'', its relation with Bekenstein bounds \cite{Longo1}, the ``quantum null energy condition'' \cite{bousso1,bousso2}, ``c-theorems'' \cite{Casini2} and many other topics. See e.g. the book \cite{Hubeny} for an exposition of holographic ideas in this context.

In this note we study chiral conformal field theories (CFTs), i.e. a 1-dimesional chiral half of a full $1+1$-dimensional CFT living on one lightray. We consider the the vacuum state $|\Omega \rangle$, and a state $|\Omega' \rangle = \exp(i \int \Theta(u) f(u) \d u) |\Omega\rangle$, where $f(u)$ is a real valued smooth testfunction such that $f(0)=0$ and  
$\Theta(u)$ is the stress tensor on the light-ray. 
We think of $|\Omega' \rangle$ as the analogue of a coherent state in the CFT. 

Let $\gM$ be the subalgebra of all observables consisting of those which are 
localized on the positive real half line. Then we obtain partial states corresponding to $|\Omega \rangle$ resp. 
$|\Omega' \rangle$ with respect to this sub-algebra. 
We formally denote their reduced density matrices by ``$\rho = {\rm Tr}_{\RR_-} |\Omega \rangle \langle \Omega |$'' and ``$\rho'
= {\rm Tr}_{\RR_-} |\Omega' \rangle \langle \Omega' |$'' and we consider their relative entropy
$S(\rho | \rho') = {\rm Tr}[\rho(\log \rho - \log \rho')]$. It is an information theoretic measure of the indistinguishability of the 
two states with respect to observers occupying the positive half line.
Our aim is to demonstrate that this quantity -- defined rigorously with operator algebraic methods -- is equal to the Schwarzian action associated with the function $f(u)$. More precisely, consider the map $s \mapsto e^s = u$ from the real line to the positive reals. Under this map, $f(u)$, viewed as a vector field, transforms 
to $F(s)=e^{-s} f(e^s)$. Let $\varphi_t(s)$ be the flow of this vector field, i.e. the solution to $\d \varphi_t(s)/ \d t = F(\varphi_t(s))$ and $\varphi_0(s)=s$.
We write $\varphi_1(s)=\varphi(s)$. Then we shall prove that
\ben
\label{eq:main}
S(\rho | \rho ') =  \frac{c}{24}  \int_\RR \left(   \varphi'(s)^2 
+  \left( \frac{\varphi''(s)}{\varphi'(s)} \right)^2 
- 1  \right) \d s = c \, I_{\rm Schwarz}(\varphi), 
\een
where $c$ is the central charge. 
$I_{\rm Schwarz}$ is the so-called ``Schwarzian action'', which has appeared in a number of other contexts, for instance the SYK-model, holographic description of the Jackiw-Teitelboim dilaton gravity, or open string theory, see e.g. \cite{1,2,3,4,5,6,7,11,12}. 

The relative entropy between the vacuum state and coherent states (of a different nature than those considered here) 
has also recently appeared in works by Casini et al. \cite{Casini} and Longo \cite{Longo} in the context of a free scalar field. While these works are more 
restrictive in that only a free theory is considered, they are more general in that the results hold also for a non-zero mass and arbitrary dimensions. Some interesting formulae for relative entropies in conformal field theory for states generated by a local primary have also been obtained by \cite{Lashkari}, pointing perhaps to a generalization of our result to other states\footnote{For localized states in the case of a $U(1)$-current field, see e.g. \cite{Longo2}. Replica method computations of relative entropies of states excited by some local operator in various CFTs have been given carried out by \cite{Calabrese1,Calabrese2}.}. In view of a wider 
potential significance for gravity, it would be desirable to clarify the connection between our result to ideas from holography such as in \cite{8,9,10}.

It has been observed long ago \cite{KW} that chiral conformal field theories arise naturally when studying quantum field theories on 
spacetimes with a bifurcate Killing-horizon, as the restriction, in some sense, of the field theory to a lightlike generator of the horizon.
In this context, $u$ corresponds to the affine parameter of the generator, $s=\kappa t$, 
where $t$ is the ``Killing parameter'' related to the horizon Killing field, and $\kappa$ is the surface gravity associated with the 
Killing horizon. Due to 
the difference between $t$ and $s$, there would now appear a further prefactor $1/\kappa$ in front of $I_{\rm Schwarz}$.

This paper is organized as follows. In sec. \ref{sec:CFT} we recall our setting for CFT and some known results used in the sequel. 
In sec. \ref{sect1}, we recall the definition of the relative entropy. Then we prove our main result \eqref{eq:main} at an increasing level of generality in the remaining sections. Our final theorem is thm. \ref{thm:2}. This theorem allows one to apply this formula to 
solitonic states in CFT (cor. \ref{cor:1}).

\section{Notation and CFT basics}
\label{sec:CFT}

In this review section we describe our notation and basic facts about the stress tensor in two-dimensional conformal field theories (CFTs). 
The material is standard, and more details may be found e.g. in~\cite{Fe&Ho05}. We employ the operator formalism for CFT. In this formalism, the stress energy operator in a CFT on $(1+1)$-dimensional Minkowski spacetime has two independent, commuting (``left and right chiral'') components. These depend only on the left and right moving light-ray coordinates $u=x^0-x^1, v=x^0+x^1$, respectively. In this paper, we focus on only one of them. It is  
a quantum field $\Theta(u)$ living on one of the light-rays. As is well-known, a light-ray may be compactified to a circle via the stereographic projection (Cayley transform), 
and in this way we get a 
quantum field on the circle. We distinguish it notationally by $T(z)$. In some sense it is actually most natural 
to turn this story around and start from the quantum field $T(z)$ on the circle, which we shall do now. 

The starting point is the Virasoro algebra, i.e. the Lie-algebra with generators  $\{ L_n, C \}_{n \in \ZZ}$  obeying
\ben
[L_n,L_m] = (n-m) L_{n+m} + \frac{1}{12} n(n^2-1) \delta_{n,-m} C, \quad 
[L_n, C] = 0 .
\een  
A positive energy representation on a Hilbert space $\H$ is a representation such that 
\begin{enumerate}
\item[(i)]
$L^*_n=L_{-n}$ (unitarity), 
\item[(ii)] $L_0$ is diagonalizable with 
non-negative eigenvalues of finite multiplicity, and 
\item[(iii)] the central element is represented by $C = c \myid$. 
\end{enumerate}
From now, we assume such a positive energy representation. We assume furthermore that $\H$ 
contains a vacuum vector $|\Omega_0\rangle$ which is annihilated by $L_{-1}, L_0, L_1$, 
($\mathfrak{sl}(2,\RR)$-invariance) and which is a highest weight vector (of weight 0), i.e. $L_n |\Omega_0\rangle = 0$ for all 
$n >0$. In \cite{carpi,buchholz,goodman_wallach,goodman_wallach1} one can find proofs for the bound
\ben
\label{poly}
\|(1+L_0)^k L_n \Psi \| \le \sqrt{c/2}(|n|+1)^{k+3/2} \|(1+L_0)^{k+1} \Psi \|
\een
for $|\Psi \rangle \in \V$, with $\V:=\bigcap_{k \ge 0} \D (L_0^k) \subset \H$ and any natural number $k$.

From the Virasoro algebra one defines the stress tensor on the unit circle $\SS$, identified with points $z=e^{i\theta}, \theta \in \RR$ in $\CC$ by 
\ben
T(z) = -\frac{1}{2\pi} \sum_{n =-\infty}^\infty L_n z^{-n-2} . 
\een
The series must be understood in the sense of distributions. This means that the mathematically meaningful object is the smeared field
which, for a smooth testfunction $f \in C^\infty(\SS)$ is defined by 
\ben
T(f) = \int_{\SS} T(z) f(z) \d z :=-\frac{1}{2\pi} \sum_{n =-\infty}^\infty  \left(\int_{\SS} z^{-n-2} f(z) \d z\right)L_n. 
\een
\eqref{poly} thereby gives the 
inequality $\| L_0^k T(f) \Psi \| \le (c/2)^{1/2} \| f \|_{W^{3/2+k,1}(\SS)} \, \|(1+L_0)^{k+1}\Psi\|$, where $W^{s,1}(\SS)$ 
is the Sobolev space\footnote{The norm of $W^{s,p}$ is $\|f\|_{W^{s,p}}= 
(\sum_{n \in {\mathbb Z}} |\hat f_n|^p (1+|n|)^{ps})^{1/p}$, where $\hat f_n$ is the $n$-th
Fourier coefficient.} of order $(s,1)$. The estimate shows that  
$L_0^kT(f)|\Psi \rangle$ is a well defined vector of $\H$ for any $f \in W^{3/2,1}(\SS)$ for any vector $|\Psi\rangle$ in the
subspace $\D(L_0^{k+1})$. It follows that $\V$ is a dense invariant domain for $T(f)$ if $f \in C^\infty(\SS)$. 

If we define $\Gamma$ to be the anti-linear involution 
\begin{equation}
\Gamma f(z) = -z^2 \overline{f(z)},
\end{equation} 
then the smeared stress tensor is a symmetric, and in fact a (closable) essentially self-adjoint- 
operator on any core of the operator $L_0$ (such as $\D(L_0)$ or $\V$), for $f \in W^{3/2,1}(\SS)$ 
obeying the reality condition $\Gamma f = f$, see thm. 4.4 of \cite{CW}. In particular, we may 
use the functional calculus to define the unitary operators $e^{iT(f)}$ and 
$\V$ is an invariant domain for any of these operators for $f \in C^\infty$.

To make the connection with the representations of the diffeomorphism group on the circle, we first note that any 
real test function (in the above sense) $f$ defines a real vector field ${\sf f}
\in {\rm Vect}_\RR(\SS)$ by means of the formula 
\ben\label{ff}
({\sf f} g)(z) = f(z) g'(z), 
\een
where $ie^{i\theta}g'(e^{i\theta}) = \frac{\d}{\d \theta} g(e^{i\theta})$. In particular, if we 
define $l_n(z) = z^{n+1}$ then the corresponding complex vector fields ${\sf l}_n = z^{n+1} \frac{\d}{\d z} \in {\rm Vect}_\CC(\SS) = {\rm Vect}_\RR(\SS) \otimes_\RR \CC$ satisfy the Witt algebra
\ben
[{\sf l}_n, {\sf l}_m]=(m-n) {\sf l}_{n+m}
\een
under the commutator of vector fields, and furthermore $iT(l_n) = L_n$.  

For real $f \in C^\infty(\SS)$, we denote by ${\rm Exp}(tf)=\rho_t \in {\rm Diff}_+(\SS)$ the 1-parameter flow of orientation preserving 
diffeomorphisms generated by the corresponding vector field ${\sf f}$. In other words, $\rho_t$ is uniquely determined by the conditions
\ben\label{flow}
\frac{\partial}{\partial t} \rho_t(z) = f(\rho_t(z)), \quad \rho_0 = \rm{id}, 
\een 
and $\rho_t$ leaves invariant all $z$ outside the support of $f$. The unitary operators $e^{iT(f)}$ can be thought of as representers of the 
the diffeomorphisms ${\rm Exp}({\sf f})$. More precisely, as shown in Prop. 5.1 by \cite{Fe&Ho05}, there exists a strongly continuous 
unitary representation $\widetilde{{\rm Diff}_+(\SS)} \owns \rho \mapsto V(\rho) \in {\mathfrak B}(\H)$ which {\bf V1)} leaves invariant the vectors 
from the dense domain $\V$, {\bf V2)}  satisfies the composition law 
\ben
\label{eq:comp}
V(\rho_1)V(\rho_2) = \exp(icB(\rho_1, \rho_2))V(\rho_1 \circ \rho_2), 
\een
and {\bf V3)} 
$\frac{\d }{\d t} V({\rm Exp}(t{\sf f})) |\Psi \rangle_{t=0} = iT(f)|\Psi \rangle$ on a dense domain of vectors $|\Psi\rangle$ 
(such as, e.g., $\V$), and this domain coincides exactly with the domain for which (the closure of) $T(f)$ is self-adjoint. Here
\ben
\label{eq:Bott}
B(\rho_1, \rho_2) = -\frac{1}{48\pi} {\rm Re} \int_{\SS} \log (\rho_1 \circ \rho_2)'(z) \frac{\d}{\d z} \log \rho_2'(z) \, \d z
\een
is the Bott ``2-cocycle''. [An $n$-{\bf cocycle} on a group $G$ is a mapping $C_n$ from $G^n \to {\mathcal X}$ where $\mathcal X$ carries a 
representation $D$ of $G$, such that $\b C_n=0$, where 
\ben
\begin{split}
\b C_n(g_1, \dots, g_{n+1}) =& C_n(g_1, \dots, g_n) + \sum_{m=1}^n (-1)^m C_n(g_1, \dots g_mg_{m+1}, \dots g_n) \\
&+ (-1)^{n+1} D(g_1) C_n(g_2, \dots, g_{n+1}) 
\end{split}
\een
is called the coboundary operation.]
It follows that $V({\rm Exp}(t{\sf f}))=e^{i\alpha(t)} e^{it\Theta(f)}$ for a phase $\alpha(t)$ (depending on $f$) such that $\alpha'(0)=0$. 
See prop. 5.1 of \cite{Fe&Ho05} for the non-trivial global assignment of the phases, which corrects an error in \cite{goodman_wallach}.

For a smooth function $f(z)$ on 
the complex plane or circle, the Schwarzian derivative is defined by 
\ben
\label{eq:S}
\S f(z) = \frac{f'''(z)}{f'(z)} - \frac{3}{2} \left( \frac{f''(z)}{f'(z)} \right)^2 . 
\een
Using the representation theoretic facts stated above, it has been shown in~\cite{Fe&Ho05} (Prop. 3.1 and Sec. 5.2), which uses results of \cite{goodman_wallach,goodman_wallach1,toledano},
that on the domain $\V$, we have the transformation formula
\ben\label{rel1}
V(\rho) T(z) V(\rho)^* = \rho'(z)^2 T(\rho(z)) - \frac{c}{24 \pi} \S\rho (z) \, \myid , 
\een
to be understood in the sense of distributions in the variable $z \in \SS$. Here 
where $\rho = \rho_{t=1}$ is the flow of $f$ at unit flow-`time', i.e., $\rho= {\rm Exp}({\sf f})$, and in the above formula, we 
could also replace $V(\rho)$ by $e^{iT(f)}$, since both operators only differ by a phase.

Going back to the real line, the stress tensor on $\RR$ is defined by pulling back the stress tensor on the circle via the Cayley transform $C$. It maps
the real line to the circle by 
\begin{equation}
\RR \owns u \mapsto C(u)=\frac{1+iu}{1-iu} \in \SS \setminus \{-1\},
\end{equation} 
with inverse $C^{-1}(z)=i(1-z)/(1+z)$. Defining $C(\infty)=-1$ this becomes a bijection between $\RR \cup \infty$ and the circle.
Then, by definition, the stress tensor on the real line is
\ben\label{real}
\Theta(u) \equiv  \left( \frac{\d C(u)}{\d u} \right)^2 T(C(u)) = -\frac{4}{(1-iu)^4} T\left(\frac{1+iu}{1-iu} \right).  
\een 
It is an operator valued distribution on the same domain. Using this formula, we can easily convert any result on stress tensor on 
the circle to one on the real line. 

\section{Relative entropy}\label{sect1}

We first recall the definition of the relative entropy in terms of modular operators. For details and background on 
operator algebras, see \cite{Bratteli}. For a recent survey of operator algebraic methods in quantum information theory in
QFT, we refer to \cite{sanders_2}. An exposition directed towards theoretical physicists is \cite{Witten:2018zxz}.

A v. Neumann algebra, $\gM$, is an algebra of bounded operators on a Hilbert space $\H$ that is 
closed in the topology induced by the size of matrix elements.\footnote{We always assume that $\H$ is separable.} 
We assume that $\H$ contains a ``cyclic and separating'' vector for $\gM$, meaning a unit vector $|\Omega \rangle$
such that the set consisting of $A|\Omega\rangle$, $A \in \gM$ is a dense subspace of $\H$, and such that $A|\Omega\rangle=0$ always implies $A=0$. We say in this case that $\gM$ is in ``standard form'' with respect to the given vector. 

The ``Tomita operator'', $S$, is the anti-linear operator on the domain $\D(S) = \{ A|\Omega\rangle \mid A \in \gM\}$ defined by 
\ben
SA|\Omega \rangle = A^* |\Omega \rangle . 
\een
$S$ is defined consistently due to the cyclic and separating property. $S$ is a closable operator\footnote{We denote its closure by the same symbol.} and as such, $S$ has a polar decomposition denoted by $S=J\Delta^\frac12$, with $J$ anti-linear and unitary and $\Delta$ self-adjoint and non-negative. 
Tomita-Takesaki theory is about the properties of the operators $\Delta, J$ and related objects, the basic ones of which are: (i) $J$ exchanges 
$\gM$ with the commutant $\gM'$ (the set of all bounded operators on $\H$ 
commuting with all operators in $\gM$) in the sense $J \gM J = \gM'$. Furthermore $J^2 = 1, J\Delta J = \Delta^{-1}$, 
(ii) The ``modular flow'' $\sigma_t(A) =  \Delta^{it} A \Delta^{-it}$ leaves $\gM$ invariant in the sense $\sigma_t \gM = \gM$ for all $t \in \RR$,
(iii) The positive, normalized (meaning $\omega(A^*A) \ge 0 \, \, \forall A \in \gM, \omega(1) = 1$) 
linear {\bf expectation functional} 
\ben 
\omega(A) = \langle \Omega |A\Omega \rangle
\een
is invariant under $\sigma^t$ and $j={\rm Ad}(J)$ and satisfies the so-called KMS condition. (iv)
Any normal (i.e. continuous in the weak$^*$-topology) positive linear functional $\omega'$ on $\gM$ has a unique vector representative 
$|\Omega' \rangle$ in the ``natural cone'' 
\ben
\cP^\sharp 
=\overline{\{ Aj(A) |\Omega \rangle \mid A \in \gM\}}, 
\een
where the overbar means closure and $j(A) = JAJ$. The state functional is thus
$\omega'(A) = \langle \Omega' |A\Omega' \rangle$ for all $a \in \gM$. 

One can generalize this construction to that of the relative modular operator, flow etc. Let $\omega'$ be another normal state on $\gM$, and let $|\Omega' \rangle$ be its unique vector representative in the natural cone in $\H$, which is assumed (for simplicity) to be cyclic and separating, too. 
Then it is consistent to define
\ben
S_{\omega, \omega'} A |\Omega' \rangle = A^* |\Omega \rangle
\een
and make the polar decomposition $S_{\omega, \omega'}=J_\omega^{} \Delta^\frac12_{\omega,\omega'}$. 
One shows that the ``relative modular operator'' $\Delta_{\omega,\omega'}$ is related to $\Delta_\omega$ by 
$\Delta^{it}_{\omega',\omega}=u(t) \Delta^{it}_{\omega}$ where $u(t)$ is a unitary from $\gM$ called the ``Connes cocycle''. 
It is specified uniquely by the cocycle relation $u(t+s)=u(t) \sigma_\omega^t(u(s))$ (which fixes $u(t)$ up to a phase $e^{it\alpha}$)
and the KMS condition ${\rm anal.cont.}_{t\to i}\omega(u(t))=1$ (which fixes the phase). The Connes cocycle is customarily denoted by 
$u(t)=(D\omega:D\omega')_{it}$ since this notation correctly suggests a chain rule. 
The {\bf Araki relative entropy} is defined by\footnote{The derivative on the right side 
is understood in the usual way as the limit 
$i \lim_{t \to 0} t^{-1}\omega(u(t)-1)$.
By thm. 5.7 of \cite{Petz}, the limit exists if $S(\omega | \omega')<\infty$.} 
\ben
S(\omega | \omega') = i \frac{\d}{\d t} \omega(u(t)) \bigg|_{t=0} = \langle \Omega | (\log  \Delta_{\omega,\omega'}) \Omega \rangle . 
\een
In the case of finite Type I factors (``quantum mechanics''), e.g. $\gM=M_N(\CC)$, the situation is this: State functionals are equivalent to 
density matrices $\rho_\omega$ via $\omega(A) = {\rm Tr}(A\rho_\omega)$, $\H$ is  the algebra itself $M_N(\CC) \cong \CC^N \otimes \CC^N$ on which $\gM$ acts by left multiplication. 
The state $|\Omega\rangle$ corresponds to $|\rho^{1/2}_\omega \rangle$, the inner product is $\langle A | B \rangle = {\rm Tr}(A^* B)$, the modular operator is
$\Delta_\omega = \rho_\omega \otimes \rho^{-1}_\omega$, and the relative modular operator is $\Delta_{\omega,\omega'}=\rho_\omega \otimes \rho^{ -1}_{\omega'}$. 
The Connes cocycle is $u(t)=\rho_{\omega'}^{it} \rho_{\omega}^{-it}$. Using this, one immediately verifies
$S(\omega | \omega') = {\rm Tr} \rho_\omega ( \log \rho_\omega - \log \rho_{\omega'})$. These formulae do not hold for 
type III factors which occur in quantum field theories, but are very useful to gain intuition.

The relative entropy and its properties are discussed in detail  e.g. in \cite{Petz}. It is never negative, but can be infinite, is decreasing under completely positive maps, is jointly convex in both arguments, to name a few. The physical interpretation of $S(\omega | \omega')$ is: $\exp[-NS(\omega | \omega')]$ is asymptotically equal to the probability of mistaking the true state of the system, $\omega$,  to be $\omega'$ having performed $N$ measurements of observables from $\gM$.

In this paper, we will need in particular the special case when
\ben
\omega'(A) \equiv \omega(U^* A U) = \langle U \Omega | A U\Omega \rangle,  
\een
where $U$ is some unitary operator from $\gM$. 
The corresponding vector representative in the natural cone is $|\Omega_U \rangle=Uj_\omega (U)|\Omega \rangle$, with $j_\omega (A)= J_\omega A J_\omega$.
It is easily seen to be cyclic and separating. 
Going through the definitions, one finds immediately that $j_\omega(U) \Delta^{1/2}_\omega j_\omega(U^*) = \Delta^{1/2}_{\omega, \omega'}$, and 
one finds that 
$
u(t)=U \Delta^{it} U^* \Delta^{-it}
$
implying that
\ben\label{Srel}
S(\omega | \omega') = i \frac{\d}{\d t} 
\langle \Omega | U \Delta^{it} U^* \Delta^{-it} | \Omega \rangle
 \bigg|_{t=0}  = -\langle U^* \Omega |  (\log \Delta) U^*\Omega \rangle, 
\een
where $\Delta$ is the modular operator of the original state $\omega$.

\section{Relative entropy between vacuum and coherent state}

We now apply \eqref{Srel} to the following situation: $\gM$ is the v. Neumann algebra $\gM_{\rm Vir}(\RR_+)= \{e^{i\Theta(g)} | g \in C^\infty_{0,\RR}(\RR_+)\}''$ associated with the smeared stress tensor on the right half-line $\RR_+$, $|\Omega\rangle = |\Omega_0\rangle$ is the vacuum state, and $U=e^{i\Theta(f)}$, where $f$ is a specific smooth compactly supported real valued function. 
Our aim is to prove \eqref{eq:main}. We will proceed by considering situations of in creasing generality in subsections \ref{sec:c1} 
and \ref{sec:c2}. At first, we assume that $f$ is supported $(k,\infty)$, where $k>0$. In this case, the argument is elementary. Subsequently, we will weaken the 
assumptions on the behavior of $f$ at $u=0$. 

To begin, we note that $|\Omega_0\rangle$ is separating (and cyclic) by the Reeh-Schlieder theorem, so the results of Tomita-Takesaki theory apply. 
We call $\omega(A)=\langle \Omega_0|A\Omega_0 \rangle$ resp. $\omega'(A)=\langle \Omega_0|U^* A U\Omega_0 \rangle$ 
the corresponding expectation functionals on $\gM_{\rm Vir}(\RR_+)$.
The modular operator $\Delta$ for $|\Omega_0\rangle$ with respect to $\gM_{\rm Vir}(\RR_+)$ is known from the Bisognano-Wichmann theorem \cite{bisognano} to be given by $\Delta^{it} A \Delta^{-it} = e^{-2\pi i t D}Ae^{2\pi i D}$, where $D=-\frac{i}{2}(L_1-L_{-1}) = \int u \Theta(u) \d u$ is the generator of dilations on the real line and $A \in \gM_{\rm Vir}(\RR_+)$. Then \eqref{Srel} gives, since $U = e^{i\Theta(f)} \in \gM_{\rm Vir}(\RR_+)$ and since the vacuum is invariant under $\Delta^{it}$ and under $e^{2\pi i t D}$:
\ben\label{Srel1}
\begin{split}
S(\omega |  \omega') &=i \frac{\d}{\d t} \langle \Omega_0 | U \Delta^{it} U^* \Delta^{-it} \Omega_0 \rangle \bigg|_{t=0} \\
&=i \frac{\d}{\d t} \langle \Omega_0 | U  e^{-2\pi it D} U^* e^{2\pi it D} \Omega_0 \rangle \bigg|_{t=0} \\
&=i \frac{\d}{\d t} \langle \Omega_0 | e^{i\Theta(f)} e^{-2\pi it D} e^{-i\Theta(f)} \Omega_0 \rangle \bigg|_{t=0} 
=2\pi \, \langle \Omega_0 |  e^{i\Theta(f)} D e^{-i\Theta(f)}  \Omega_0 \rangle ,
\end{split}
\een
using in the last step that $e^{-i\Theta(f)} |\Omega_0 \rangle \in \V=\bigcap_{k \ge 0} \D (L_0^k)$, which is  
a common core for the operators $\Theta(g)$, and in particular $D$. This follows from Theorem VIII.11 of \cite{Reed} and footnote 8 of \cite{Fe&Ho05}. 

At this stage, we are allowed to use the transformation formula pulled back to the real line via the Caley transform \eqref{rel1}  and the remark below that equation to obtain 
\ben\label{rel2}
e^{i\Theta(f)} D e^{-i\Theta(f)} = V(\rho) D V(\rho)^* =  \int_{-\infty}^\infty u \rho'(u)^2 \Theta(\rho(u)) \, \d u - \frac{c}{24 \pi}  \int_{-\infty}^\infty u \S\rho (u) \,  \d u , 
\een
where $\rho(u)={\rm Exp}(f)(u)$ is the diffeomorphism of the positive half-line generated by the vector field $f(u) \d / \d u$, i.e. $\rho(u) = \rho_1(u)$ 
for the flow  $\d \rho_t(u) / \d t=f(\rho_t(u))$ and $\rho_0(u)=u$. 
This diffeomorphism is by construction equal to the identity on the left 
real half-line, meaning that the second term on the right side of this equation requires integration only over the positive real half line. The first term involving 
the stress tensor has a vanishing expectation value in the vacuum, therefore 
\ben
\label{eq:Su}
S(\omega |  \omega')=  -\frac{c}{12}  \int_{0}^\infty u \S\rho (u) \,  \d u . 
\een
The  term on the right side can be rewritten in a more suggestive form introducing 
$u=e^s$ and $\varphi(s) = \log(\rho(e^s))$. Geometrically, $\varphi$ is a diffeomorphism on the real line induced by $\rho$ under the exponential map.
Therefore, it can be viewed as the flow $\varphi_t(s)$ at unit time $t=1$ of the pull-back of the vector field $f(u)$ given by $F(s) = e^{-s} f(e^s)$, i.e. 
$\d \varphi_t(s) / \d t=F(\varphi_t(s))$.
Using repeatedly the chain rule 
\ben
\S(H \circ G) = (G')^2 (\S H) \circ G + \S G
\een
for the Schwarzian derivative of $\rho= \exp \circ \varphi \circ \log$, we have the elementary calculation
\ben
\begin{split}
\S\rho(u) 
&= (\varphi \circ \log)'(u)^2 (\S\exp) \circ \varphi(\log u) + \S(\varphi \circ \log)(u)\\
&= -\tfrac{1}{2} u^{-2} [\varphi' \circ \log (u)]^2 + \S(\varphi \circ \log)(u)\\
&=  -\tfrac{1}{2} u^{-2}[\varphi' \circ \log(u)]^2 + u^{-2} (\S\varphi) \circ \log (u) + (\S \log)(u) \\
&=u^{-2} \{-\tfrac{1}{2} [\varphi'(\log u)]^2 +  (\S\varphi) (\log u) + \tfrac{1}{2}\},
\end{split}
\een
and therefore, since $u=e^s$, 
\ben
u^2 \S\rho(u) 
= \S\varphi(s) + \tfrac{1}{2} -  \tfrac{1}{2} [\varphi'(s)]^2
\een
Thus, using $\d u/u=\d s$, we obtain from \eqref{eq:Su}, upon substitution of the Schwarzian derivative \eqref{eq:S}
\ben
\label{eq:thm1}
\begin{split}
S(\omega |  \omega') &= - \frac{c}{12}  \int_{-\infty}^\infty \left( \S\varphi(s) + \tfrac{1}{2} -  \tfrac{1}{2} [\varphi'(s)]^2 \right) \d s \\
&=\frac{c}{12}  \int_\RR \left(   \frac{1}{2} \varphi'(s)^2 
-\left( \frac{\varphi''(s)}{\varphi'(s)} \right)' 
+ \frac{1}{2} \left( \frac{\varphi''(s)}{\varphi'(s)} \right)^2 
- \frac{1}{2}  \right) \d s .  
\end{split}
\een
This gives the claim made in \eqref{eq:main} since we can drop the total derivative term under the last integral, as $f$ has compact support on the positive 
half-axis $\RR_+$, so $\varphi''(s)/\varphi'(s)=0$ outside a compact set. Another way to write the relative entropy is 
\ben
\label{eq:S0}
S(\omega |  \omega') =  \frac{c}{24} \int_0^\infty u \left( \frac{\d}{\d u} \log \rho'(u) \right)^2 \d u \ , \quad \rho = {\rm Exp}(f), 
\een
and this makes manifest the general property $S(\omega | \omega') \ge 0$. We summarize the result in a theorem.

\begin{thm}
\label{thm:1}
Let $\omega$ be the vacuum, 
$f \in C^\infty_{0,\RR}(\RR)$, with $f(u)=0$, when $u<\epsilon$ for some $\epsilon>0$. Let 
$\omega'(A)=\omega(e^{-i\Theta(f)} A e^{i\Theta(f)})$ be the corresponding coherent state, $\rho = {\rm Exp}(f)$ the corresponding diffeomorphism (so $\rho(0)=0, \rho'(0)=1$), and $\varphi(s) = \log \rho(e^s)$. Then the relative entropy with respect to  $\gM_{\rm Vir}(0,\infty)$ is
\ben
S(\omega | \omega') = c\,   I_{\rm Schwarz}(\varphi) .
\een
\end{thm}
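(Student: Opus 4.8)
The plan is to read off the relative entropy from formula \eqref{Srel}, which already reduces $S(\omega|\omega')$ to a single vacuum matrix element once the modular operator $\Delta$ of the vacuum with respect to $\gM_{\rm Vir}(\RR_+)$ is known. The essential structural input I would invoke is the Bisognano--Wichmann theorem: for the half-line algebra of a chiral CFT the modular flow is geometric and coincides with the dilation flow $u \mapsto e^{-2\pi t}u$, so that $\Delta^{it}$ acts as $e^{-2\pi i t D}$ with $D = \int u\,\Theta(u)\,\d u$ the dilation generator, which annihilates the vacuum. Because $f$ vanishes for $u<\epsilon$, the unitary $U=e^{i\Theta(f)}$ genuinely lies in $\gM_{\rm Vir}(\RR_+)$, so \eqref{Srel} applies directly.

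First I would substitute the geometric action of $\Delta^{it}$ into \eqref{Srel}. Using that the vacuum is annihilated by $D$ and hence invariant under both $\Delta^{it}$ and under dilations, the matrix element collapses to $\langle\Omega_0|e^{i\Theta(f)}e^{-2\pi i t D}e^{-i\Theta(f)}\Omega_0\rangle$, and differentiating at $t=0$ brings down the generator to give $S(\omega|\omega') = 2\pi\,\langle\Omega_0|e^{i\Theta(f)}D\,e^{-i\Theta(f)}\Omega_0\rangle$, as in the last line of \eqref{Srel1}. Next I would evaluate the conjugated generator via the transformation law \eqref{rel1} for $V(\rho)T V(\rho)^*$, pulled back to the real line through the Cayley transform as in \eqref{rel2}, with $\rho={\rm Exp}(f)$. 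The stress-tensor term $\int u\,\rho'(u)^2\,\Theta(\rho(u))\,\d u$ has vanishing vacuum expectation, so only the $c$-number Schwarzian contribution survives and one obtains $S(\omega|\omega') = -\frac{c}{12}\int_0^\infty u\,\S\rho(u)\,\d u$, the integration range being effectively $(0,\infty)$ since $\rho$ is the identity where $f$ vanishes.

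To reach the advertised form I would change variables to $u=e^s$ and set $\varphi(s)=\log\rho(e^s)$, so that $\rho=\exp\circ\varphi\circ\log$. The cocycle chain rule $\S(H\circ G)=(G')^2(\S H)\circ G+\S G$, together with $\S\exp=-\tfrac12$ and $(\S\log)(u)=\tfrac12\,u^{-2}$ (a short computation), turns $u^2\,\S\rho(u)$ into $\S\varphi(s)+\tfrac12-\tfrac12\varphi'(s)^2$. Inserting this into the integral and expanding $\S\varphi$ with \eqref{eq:S} gives \eqref{eq:thm1}; since $f$ is compactly supported away from the origin, $\varphi''/\varphi'$ vanishes outside a compact set, the total-derivative term integrates to zero, and what remains is exactly $c\,I_{\rm Schwarz}(\varphi)$.

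The hard part will not be the algebra but the analytic bookkeeping around that single matrix element. Commuting the $t$-derivative in \eqref{Srel} past the inner product requires $e^{-i\Theta(f)}|\Omega_0\rangle$ to lie in the domain of $D$, indeed in a common invariant core such as $\V=\bigcap_{k\ge0}\D(L_0^k)$, which rests on the mapping properties of $e^{i\Theta(f)}$ that follow from the energy bounds \eqref{poly} and the essential self-adjointness of $\Theta(f)$. One must also justify smearing the distributional identity \eqref{rel1} against $u$ and evaluating it in the vacuum, and check that replacing $V(\rho)$ by $e^{i\Theta(f)}$ (they differ only by the phase fixed through \eqref{eq:comp}) leaves the conjugation $V(\rho)D V(\rho)^*$ unaffected. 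Granting these domain and self-adjointness facts, all of which are the representation-theoretic results quoted from \cite{Fe&Ho05,CW}, the remaining steps are elementary.
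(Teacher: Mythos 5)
Your proposal is correct and follows essentially the same route as the paper's own proof: the Bisognano--Wichmann identification of the modular flow with dilations, the reduction via \eqref{Srel} to the matrix element $2\pi\,\langle\Omega_0|e^{i\Theta(f)}D\,e^{-i\Theta(f)}\Omega_0\rangle$, the transformation law \eqref{rel1}/\eqref{rel2} with the vacuum killing the stress-tensor term, and the change of variables with the Schwarzian chain rule. Your closing paragraph on domain questions also matches the paper's justifications (the core $\V$, essential self-adjointness of $\Theta(f)$, and the phase-irrelevance of replacing $V(\rho)$ by $e^{i\Theta(f)}$), so there is nothing to add.
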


The theorem admits an obvious generalization to conformal nets which are finite index extensions of the net $\{\gM_{\rm Vir}(I)\}$, see 
\cite{KawahigashiLongo} for a complete classification in the case $c<1$ and for the precise definitions. For us, it is only important that such a net $\{\gB(I)\}$ is labelled by open intervals $I \subset \SS$, and that for each such interval $\gB(I)$ is a v. Neumann factor
containing $\gM_{\rm Vir}(I)$ such that the Jones index $[\gB(I):\gM_{\rm Vir}(I)]$ is finite. By the Pimsner-Popa theorem, this entails the 
existence of a ``conditional expectation'' $E_I: \gB(I) \to \gM_{\rm Vir}(I)$, that is a linear map such that $E_I(B^*B) \ge k B^*B$ for $k>0$  for any $B \in \gB(I)$, and such that $E_I(A_1BA_2)=A_1E_I(B)A_2$ for $A_1,A_2 \in \gM_{\rm Vir}(I)$. There is a unique ``minimal'' $E_I$ such that $k$ is largest (actually equal to the inverse squared Jones index).

We can pull the nets back to the real line via the Caley transform. The minimal conditional expectation $E: \gB(\RR_+) \to \gM_{\rm Vir}(\RR_+)$ defines a faithful normal state on $\gB(\RR_+)$ by $\psi:=\omega \circ E$, where $\omega$ is the vacuum on  
$\gM_{\rm Vir}(\RR_+)$. Then we have:

\begin{corollary}
Let $\psi'$ be the coherent state defined  from $\psi$ as in thm. \ref{thm:1}. Then $S_{\gB(\RR_+)}(\psi | \psi') = c \, I_{\rm Schwarz}(\varphi)$. 
\end{corollary}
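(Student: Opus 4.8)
The plan is to reduce the corollary directly to Theorem \ref{thm:1} by exploiting the compatibility of relative entropy with the conditional expectation $E$, rather than re-running the Tomita--Takesaki computation on the larger algebra $\gB(\RR_+)$. First I would record the key algebraic identity that both states in question are pull-backs along $E$ of the two states appearing in Theorem \ref{thm:1}. Indeed, by construction $\psi = \omega \circ E$, and $\psi'(A) = \psi(e^{-i\Theta(f)} A\, e^{i\Theta(f)}) = \psi(U^* A U)$ with $U = e^{i\Theta(f)} \in \gM_{\rm Vir}(\RR_+)$. Since $E$ is a conditional expectation onto $\gM_{\rm Vir}(\RR_+)$ and $U$ lies in that subalgebra, the bimodule property $E(A_1 B A_2) = A_1 E(B) A_2$ gives $E(U^* A U) = U^* E(A) U$, whence
\ben
\psi'(A) = \omega\big(E(U^* A U)\big) = \omega\big(U^* E(A) U\big) = \omega'\big(E(A)\big) = (\omega' \circ E)(A).
\een
Thus $\psi = \omega \circ E$ and $\psi' = \omega' \circ E$, with $\omega,\omega'$ exactly the vacuum and coherent state of Theorem \ref{thm:1}.

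Next I would invoke monotonicity of Araki's relative entropy under pre-composition with normal unital completely positive maps, applied twice so as to sandwich the quantity of interest. On the one hand, the inclusion $\iota : \gM_{\rm Vir}(\RR_+) \hookrightarrow \gB(\RR_+)$ is a normal unital $*$-homomorphism, and pulling $\psi,\psi'$ back along $\iota$ is just restriction; since $E$ acts as the identity on $\gM_{\rm Vir}(\RR_+)$ we have $\psi|_{\gM_{\rm Vir}(\RR_+)} = \omega$ and $\psi'|_{\gM_{\rm Vir}(\RR_+)} = \omega'$, so monotonicity yields $S_{\gM_{\rm Vir}(\RR_+)}(\omega | \omega') \le S_{\gB(\RR_+)}(\psi | \psi')$. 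On the other hand, $E : \gB(\RR_+) \to \gM_{\rm Vir}(\RR_+)$ is itself a normal unital completely positive map, and pulling $\omega,\omega'$ back along $E$ returns precisely $\psi,\psi'$ by the previous step, so monotonicity in this direction yields the opposite inequality $S_{\gB(\RR_+)}(\psi|\psi') \le S_{\gM_{\rm Vir}(\RR_+)}(\omega|\omega')$. Combining the two inequalities gives $S_{\gB(\RR_+)}(\psi | \psi') = S_{\gM_{\rm Vir}(\RR_+)}(\omega | \omega')$, and Theorem \ref{thm:1} identifies the right-hand side with $c\, I_{\rm Schwarz}(\varphi)$.

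Finally I would verify the hypotheses that make both applications of monotonicity legitimate. The minimal conditional expectation $E$ supplied by the Pimsner--Popa theorem for the finite-index inclusion $\gM_{\rm Vir}(\RR_+) \subset \gB(\RR_+)$ is unital, normal, and faithful (the bound $E(B^*B) \ge k\, B^*B$ with $k>0$ forces $B^*B \le k^{-1}E(B^*B)$, hence $B=0$ whenever $E(B^*B)=0$), so $\psi = \omega\circ E$ and $\psi' = \omega'\circ E$ are faithful normal states on $\gB(\RR_+)$ and their relative entropy is well defined; moreover $E$ and $\iota$ are both normal, unital and completely positive, which is exactly what the monotonicity theorem requires. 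The only genuinely delicate point is the appeal to monotonicity of relative entropy in the type III von Neumann setting with possibly infinite values, which is a standard result in the operator-algebraic theory of Araki entropy; I expect this to be the main technical ingredient, rather than the algebraic identity $\psi'=\omega'\circ E$, which is elementary. In effect the corollary asserts that passing to a finite-index extension adds nothing to the relative entropy, because the two states differ only through the common factor $E$.
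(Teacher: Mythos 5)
Your proof is correct and takes essentially the same route as the paper, whose entire argument is the chain $S_{\gB(\RR_+)}(\psi \,|\, \psi') = S_{\gB(\RR_+)}(\omega\circ E \,|\, \omega'\circ E) = S_{\gM_{\rm Vir}(\RR_+)}(\omega \,|\, \omega')$, citing thm.~5.19 of \cite{Petz} for the last equality and then invoking thm.~\ref{thm:1}. The only difference is that you spell out two steps the paper leaves implicit: the bimodule-property computation showing $\psi' = \omega'\circ E$, and a proof of the cited invariance result by sandwiching with two applications of monotonicity (along the inclusion $\iota$ and along $E$), which is precisely the standard argument behind that theorem.
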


\begin{proof}
We have $S_{\gB(\RR_+)}(\psi | \psi') = S_{\gB(\RR_+)}(\omega\circ E | \omega' \circ E) = S_{\gM_{\rm Vir}(\RR_+)}(\omega | \omega' )$, where the last equality is a standard property of the relative entropy for conditional expectations $E$, see e.g. thm. 5.19 of \cite{Petz}. The result thereby follows from thm. \ref{thm:1}.
\end{proof}

\subsection{Generalization: Case 1}
\label{sec:c1}
It is natural to ask what happens if $f \in C^\infty_{0,\RR}(\RR)$ is of compact support but the support is not necessarily on the positive half-axis $\RR_+$. In 
this case, $U=e^{i\Theta(f)}$ does not necessarily belong  to the algebra $\gM_{\rm Vir}(0,\infty)$ of the positive half-axis, and formula \eqref{Srel} may not be valid. It is easy to see that, regardless of the choice of $f$, as long as the support of $f$ is compact, the relative entropy between the vacuum and the corresponding coherent 
state is always finite. To this end, consider a $k>0$ such that the support of $f$ is in $(-k,\infty)$. Now, the relative entropy $S_{(-k,\infty)}(\omega | \omega')$ relative to the algebra $\gM_{\rm Vir}(-k,\infty) \supset \gM_{\rm Vir}(0,\infty)$ is not less than 
$S_{(0,\infty)}(\omega | \omega')$ (by monotonicity, see e.g. thm. 5.3 of \cite{Petz}), 
and can be computed by means of thm. \ref{thm:1}, because the vacuum is translation invariant 
and we can thereby effectively shift $k$ to $0$. Thus, in particular $S_{(-k,\infty)}(\omega | \omega')<\infty$ and therefore also 
$S_{(0,\infty)}(\omega | \omega')<\infty$.

However, this does not tell us how to actually compute the relative entropy. But 
we can adapt the preceding argument at least in certain cases. 
First, we restrict attention in this subsection to the special case when $f(0)=f'(0)=0$, which in view of 
the differential equation \eqref{flow}
defining $\rho = {\rm Exp}(f)$ 
is the same as saying that $\rho(0)=0, \rho'(0)=1$. Then we can write $\rho=\rho_- \circ \rho_+=\rho_+ \circ \rho_-$, where 
\ben
\label{eq:rpm}
\rho_\pm(u) = 
\begin{cases}
\rho(u) & \text{if $u \in \RR_\pm$,}\\
u & \text{if $u \in \RR_\mp \cup \{0\}$.}
\end{cases}
\een
The diffeomorphisms $\rho_\pm$, while smooth away from $u=0$, are only 
of differentiablity class $C^1$ at $u=0$. Now, if we could nevertheless 
show that the unitary representer $V(\rho)$ of $\rho$ factors as $V(\rho_-)V(\rho_+)$ up to a phase, 
then we could expect that $V(\rho_\pm)$ are elements of the algebras $\gM_{\rm Vir}(\RR_\pm)$ of the right/left half axis. 
If true, the Connes cocycle would be 
\ben
\label{eq:Connes0}
(D\omega : D\omega')_{it} = V(\rho_+) \Delta^{it} V(\rho_+)^* \Delta^{-it} . 
\een
 Then, if we could show that $V$ still forms a representation up to phases of the group 
 ${\rm Diff}_+^{{\rm ps},1}(\SS)$ of $C^1$ and piecewise 
smooth diffeomorphisms on $\SS$ (respectively on $\RR$ via the Caley transform\footnote{We will write simply $\rho \in {\rm Diff}_+^{{\rm ps},1}$ if we want to leave it open whether $\rho$ should be viewed as a function on $\RR$ or $\SS$. In the former case it is 
understood that $\rho$ should be $C^1$ at the point $u=\infty$.}) with composition law \eqref{eq:comp}, we could write 
$V(\rho_+) \Delta^{it} V(\rho_+)^* \Delta^{-it}=V(\rho_+ \circ \delta_t \circ \rho_+^{-1} \circ \delta_t^{-1}) e^{icB(\rho_+ , \delta_t \circ \rho_+^{-1} \circ \delta_t^{-1})}$
with $\delta_t(u)=e^{-2\pi t}u$ denoting a dilation of the lightray. So it would follow
\ben
\label{eq:Connes1}
(D\omega : D\omega')_{it} = V(\rho_+ \circ \delta_t \circ \rho_+^{-1} \circ \delta_t^{-1}) e^{icB(\rho_+ , \delta_t \circ \rho_+^{-1} \circ \delta_t^{-1})} . 
\een 
Finally if we are allowed to differentiate this formula inside an expectation value with respect to $|\Omega_0\rangle$,   
 then the above argument would go through unchanged, 
and we would again obtain \eqref{eq:thm1}. The total derivative term still does not give a boundary term at $s=-\infty$ ($u=0$), 
and so we would still obtain \eqref{thm:1} under the assumption that $f(0)=f'(0)$. 

However it is not totally obvious that these arguments really go through because the diffeomorphisms $\rho_\pm$, while smooth away from $u=0$, are only of differentiablity class $C^1$ at $u=0$. The composition formula \eqref{eq:comp} needs to be justified for non-smooth diffeomorphims. Such questions have been dealt with in great detail by \cite{CW,delVeccio}, and all the relevant arguments are in fact available in the literature. We will therefore be brief. 

If $g$ is a piecewise smooth, real, compactly supported $C^1$-function on $\SS$, then by standard arguments, $g \in W^{s,1}$
for any $s<2$. Therefore, by prop. 4.4 of \cite{CW}, $T(g)$ is a closable essentially self-adjoint operator (e.g. on $\D(L_0)$). 
We can hence define $e^{iT(g)}$ in the usual way by the spectral theorem. Furthermore, if $g_n \to g$ in $W^{3/2,1}$, 
$e^{iT(g_n)} \to e^{iT(g)}$ in the strong operator topology (prop. 4.5 of \cite{CW}). 

These facts already suffice to make the above argument rigorous. First, let $f^\pm(u)=0$ for $u\in \RR_\mp$ and $f^\pm(u)=f(u)$ 
for $u \in \RR_\pm$. Let $f_n^\pm$ be a sequence of real valued $C_0^\infty$ functions such that $f_n^\pm \to f^\pm$ in $W^{3/2,1}$
and such that the support of $f^\pm_n$ is strictly contained in $\RR_\pm$. Then $e^{i\Theta(f_n^+)+i\Theta(f^-_n)} = e^{i\Theta(f_n^+)}
e^{i\Theta(f^-_n)}$. Both sides strongly converges to $e^{i\Theta(f)} = e^{i\Theta(f^+)}
e^{i\Theta(f^-)}$, and $e^{i\Theta(f^\pm)} \in \gM_{{\rm Vir}}(\RR_\pm)$.  This already justifies \eqref{eq:Connes0}. 
It is also straightforward to justify the analogue of \eqref{Srel}. First we evaluate 
\ben
\begin{split}
&i \frac{\d}{\d t} \omega(e^{i\Theta(f_n^+)} e^{-2\pi itD} e^{-i\Theta(f_n^+)} e^{2\pi itD}) \\
= &2\pi 
\omega\bigg(e^{i\Theta(f_n^+)} e^{-2\pi itD} e^{-i\Theta(f_n^+)} [\Theta(g_n) -  c\beta({\rm Exp}(f^+_n), u) {\bf 1}]\bigg),
\end{split}
\een
where $g_n = {\rm Exp}(f^+_n)_* u$, and where $\beta(\rho, f)=1/(24\pi )\int_\RR f(u) \S \rho (u) \d u$.
We know $e^{\pm i\Theta(f_n^+)} \to e^{\pm i\Theta(f^+)}$ converge strongly and $g_n$ converges in $C^1$ to a piecewise smooth function, hence in $W^{3/2,1}$. Thus, $\Theta(g_n) |\Omega_0\rangle$ converges in $\H$.  
Likewise, $\beta({\rm Exp}(f^+_n), u)$
also converges using the explicit formula for the Schwarzian derivative $\S$.
Therefore, the right side converges uniformly in $t$, implying that the derivative 
at $t=0$ can be evaluated before taking the limit $n \to \infty$. We conclude in view of \eqref{Srel}, \eqref{eq:Connes0}
that \eqref{eq:Su} holds in the present setting. Hence, thm. \ref{thm:1} still holds for $f \in C^\infty_0$ whose support is 
may include the negative real axis but still $f(0)=f'(0)=0$. 

\medskip

While this accomplishes our goal, we want to show in the remainder of this section that, for certain piecewise smooth $C^1$ diffeomorphisms $\gamma$ of $\SS$, we can define unitary representers $V(\gamma)$ up to phase in such a way  that 
V1)--V3) of sec. \ref{sec:CFT}, and in particular \eqref{eq:comp}, 
holds. This will establish eq. \eqref{eq:Connes1} for the Connes cocycle, and will also be of use in the next section. 
To fix notation, we make the following definitions:

\begin{defn}
\label{def:1}
 ${\rm B}_0$ is the group of diffeomorphisms $\gamma$ of $\RR \cup \infty = \SS$ which are smooth except at $u=0$ and $C^1$ everywhere, and such that $\gamma(0)=0, \gamma'(0)=1$. The Lie-algebra ${\frak b}_0$ is the space 
of $C^1$ functions $g(u)$ which are smooth except $u=0$ and such that $g(0)=g'(0)=0$. 
\end{defn}

We can equip ${\rm B}_0$ with a natural 
topology giving it the structure of a Frechet manifold as follows. If $\gamma \in {\rm B}_0$ 
then $\log \gamma(e^{i\theta}) - \theta$ is a $C^1$ map which is smooth except at $\theta=0$ where it has derivative 0, and period $2\pi$. Vice versa, any such $\gamma$ can be obtained in this way. We see that ${\rm B}_0$ gets identified with a 
convex subset of $C^{1,{\rm ps}}_{2\pi}(\RR, \RR)$. This space can be equipped with a family of semi-norms providing it with the structure of a Frechet space in a natural way. 

Next, we recall that it has been shown in lemma A.3 of \cite{delVeccio} that any $\gamma \in {\rm B}_0$ can be decomposed as ${\rm Exp}(g) \circ \underline{\gamma}$, with $g \in {\mathfrak b}_0$ and with $\underline{\gamma} \in {\rm Diff}_+^\infty$. The construction 
of $g$ depends on certain arbitrary choices, but by going through the proof of lemma A.3 of \cite{delVeccio}, one 
can see that locally near a given $\gamma$, the assignment of $g$ (hence of $\underline{\gamma}$) is $C^1$ in 
the topology of ${\rm B}_0$ (and the induced topology on ${\frak b}_0$. 
One can then define a unitary transformation implementing $\gamma$ by $U(\gamma):=e^{iT(g)} V(\underline{\gamma})$, and \cite{delVeccio} have shown that this gives a projective representation of ${\rm B}_0$ (prop. A.4). Furthermore, if $|\Psi\rangle \in \V$, 
the map ${\rm B}_0 \owns \gamma \mapsto U(\gamma)|\Psi\rangle \in \H$ is differentiable. This can be seen by applying the Duhamel 
formula ($f \in T_\gamma {\rm B}_0, D$ the directional derivative):
\ben
\begin{split}
D_f U(\gamma)|\Psi\rangle =& \int_0^1 e^{i(1-s)T(g)} T(D_f g) e^{isT(g)} V(\underline \gamma) |\Psi\rangle \, \d s +
e^{iT(g)} D_f V(\underline \gamma) |\Psi\rangle \\
=& e^{iT(g)} \int_0^1 \bigg( T({\rm Exp}(-sg)_* D_f g) + c\beta({\rm Exp}(-sg), D_f g) {\bf 1} \bigg) V(\underline \gamma) |\Psi\rangle \, \d s +\\
& e^{iT(g)} D_f V(\underline \gamma) |\Psi\rangle . 
\end{split}
\een
We note that all terms on the right side are well defined since ${\rm Exp}(sg)_* D_f g(\gamma) \in {\frak b}_0 \subset W^{3/2,1}$
with continuous dependence on $s$, and since $\gamma \mapsto V(\underline \gamma) |\Psi\rangle$ is differentiable by the 
results of \cite{Fe&Ho05}, sec. 5. To go to the 
second line, we have used the transformation formula \eqref{rel1}, justified in the present case by lemma 3.3 of \cite{delVeccio}
(which thereby in particular shows that the subspace $\{ U(\gamma) |\Psi\rangle \, | \, \Psi \in \V, \gamma \in {\rm B}_0\}$
is in the domain of any $T(f)$, when $f \in {\frak b}_0$).   

Next, we want to change the representers $U(\gamma)$ to new representers $V(\gamma)$ by assigning additional phases
in such a way that \eqref{eq:comp} holds. It  is plausible that this ought to be possible because
 the Bott-cocycle $B(\gamma_1,\gamma_2)$ is still well-defined when $\gamma_1, \gamma_2 \in {\rm B}_0$. We can see it e.g. from 
the explicit form \eqref{eq:Bott}, because $(\gamma_2 \circ \gamma_2)', \gamma_2'$ are smooth up to a finite number of points where they are continuous, and therefore the integrand is locally like the product of a step function times a continuous function. 
That such an assignment of phases is possible was shown in 
prop. 5.1 of \cite{Fe&Ho05} for the case of smooth diffeomorphisms. 
The main difference is that in the present case, the assignment $\gamma \mapsto U(\gamma)$ (providing the local charts of 
the bundle $\widehat {\rm B}_0$ in the language of \cite{Fe&Ho05}), has been shown only to be once differentiable, rather than smooth. However, inspecting the argument given in prop. 5.1 of \cite{Fe&Ho05} shows that it is essentially of topological nature, so 
differentiablity is enough to reach the same conclusion (with phases on ${\rm B}_0$ that are once differentiable). 
Furthermore, as also shown in 
prop. 5.1 of \cite{Fe&Ho05}, we may achieve that $\frac{\d }{\d t} V({\rm Exp}(tf)) |\Psi \rangle_{t=0} = iT(f)|\Psi \rangle$ 
e.g. on the dense domain $\{ V(\gamma) |\Psi\rangle \, | \, \Psi \in \V, \gamma \in {\rm B}_0\}$. To summarize, we have:

\begin{lemma}
\label{lem:0}
We have V1)-V3) (see sec. \ref{sec:CFT})  for ${\rm B}_0 \owns \gamma \mapsto V(\gamma)|\Psi\rangle \in \H$, which is differentiable,
on the dense domain $\{ V(\gamma) |\Psi\rangle \, | \, \Psi \in \V, \gamma \in {\rm B}_0\}$.
\end{lemma}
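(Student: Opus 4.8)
The plan is to build the unitaries $V(\gamma)$ from the representers already at hand and then check V1)--V3) in turn, taking throughout the dense domain $\mathscr{D} := \{V(\gamma)|\Psi\rangle \,|\, \Psi\in\V,\ \gamma\in{\rm B}_0\}$. I would start from the decomposition $\gamma={\rm Exp}(g)\circ\underline\gamma$, with $g\in{\mathfrak b}_0$ and $\underline\gamma\in{\rm Diff}_+^\infty$, furnished by lemma A.3 of \cite{delVeccio}, and the associated unitaries $U(\gamma)=e^{iT(g)}V(\underline\gamma)$, which form a projective representation of ${\rm B}_0$ by prop. A.4 of \cite{delVeccio}. The invariance statement V1) --- that $V(\gamma)$ maps $\mathscr{D}$ into $\mathscr{D}$ --- is then a formal consequence of the composition law V2): once $V(\gamma)V(\gamma')=e^{icB(\gamma,\gamma')}V(\gamma\circ\gamma')$ is known and $\gamma\circ\gamma'\in{\rm B}_0$, we get $V(\gamma)V(\gamma')|\Psi\rangle\in\mathscr{D}$, which is all V1) requires. (For V3) one also needs $\mathscr{D}$ to lie in the domain of every $T(f)$ with $f\in{\mathfrak b}_0$; this is lemma 3.3 of \cite{delVeccio}.) I would therefore defer V1) and treat the composition law first.

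For the differentiability claim I would simply record the Duhamel computation already displayed for $\gamma\mapsto U(\gamma)|\Psi\rangle$ with $|\Psi\rangle\in\V$: every term there lies in $\H$ because ${\rm Exp}(sg)_* D_f g\in{\mathfrak b}_0\subset W^{3/2,1}$ with continuous dependence on $s$, and because $\gamma\mapsto V(\underline\gamma)|\Psi\rangle$ is differentiable by sec. 5 of \cite{Fe&Ho05}; since the local choice of $(g,\underline\gamma)$ is $C^1$ in the topology of ${\rm B}_0$, the chart map $\gamma\mapsto U(\gamma)$ is once differentiable. Differentiability of $\gamma\mapsto V(\gamma)|\Psi\rangle$ then follows because $V=e^{i\alpha}U$ with a $C^1$ phase $\alpha$, constructed next.

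The heart of the argument is V2), the passage from the projective representation $U$ to genuine representers $V$ obeying \eqref{eq:comp} with the Bott cocycle. First I would check that $B(\gamma_1,\gamma_2)$ of \eqref{eq:Bott} still makes sense on ${\rm B}_0$: in the explicit integral, $(\gamma_1\circ\gamma_2)'$ and $\gamma_2'$ are smooth away from the finitely many points where they are merely continuous, so the integrand is a bounded measurable function and the integral converges. The task is then to absorb the discrepancy between the cocycle of the $U(\gamma)$ and $B$ into a $\gamma$-dependent phase, setting $V(\gamma)=e^{i\alpha(\gamma)}U(\gamma)$. Here I would invoke prop. 5.1 of \cite{Fe&Ho05}, which carries out exactly this phase-trivialization for smooth diffeomorphisms, and argue that its construction is essentially topological --- it uses only the connectedness and covering structure of the relevant central extension of the diffeomorphism group together with the differentiability of the local charts --- so it applies verbatim once those charts are known to be $C^1$ rather than $C^\infty$, yielding phases $\alpha$ that are themselves $C^1$ in $\gamma$.

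Finally, V3) comes out of the same construction: prop. 5.1 of \cite{Fe&Ho05} arranges in addition that $V({\rm Exp}(tf))=e^{i\alpha(t)}e^{itT(f)}$ with $\alpha'(0)=0$, so differentiating at $t=0$ gives $\left.\tfrac{\d}{\d t}V({\rm Exp}(tf))|\Psi\rangle\right|_{t=0}=iT(f)|\Psi\rangle$ on $\mathscr{D}$, and V1) then follows as noted. The step I expect to be the main obstacle is precisely the reduction from $C^\infty$ to $C^1$ in this phase-trivialization: one must go through the proof of prop. 5.1 of \cite{Fe&Ho05} and confirm that nowhere does it use more than $C^1$ regularity of the chart maps $\gamma\mapsto U(\gamma)$ (or more than the established regularity of $B$), so that the cohomological construction of $\alpha$ still closes up and delivers a once-differentiable phase. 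Everything else is assembly of the cited inputs from \cite{delVeccio,Fe&Ho05,CW}.
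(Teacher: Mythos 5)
Your proposal follows the paper's own argument step for step: the same decomposition $\gamma={\rm Exp}(g)\circ\underline\gamma$ from lemma A.3 of \cite{delVeccio} with the $C^1$ chart $\gamma\mapsto(g,\underline\gamma)$, the projective representation $U(\gamma)=e^{iT(g)}V(\underline\gamma)$ of prop.~A.4, the Duhamel computation for differentiability (with lemma 3.3 of \cite{delVeccio} supplying the domain statement), the extension of the Bott cocycle to ${\rm B}_0$, and the observation that the phase-trivialization of prop.~5.1 of \cite{Fe&Ho05} is essentially topological and hence survives the weakening from $C^\infty$ to $C^1$ charts. You correctly identify that last reduction as the crux, exactly as the paper does, so there is nothing to add.
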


\begin{remark}
One can generalize this reasoning without difficulty to the group of $C^1$ diffeomorphisms $\gamma$ of $\SS$ (or $\RR \cup \infty$ via the Caley transform) which are smooth 
except for a finite set $\{z_1, \dots, z_N\}$ of given points where $\gamma(z_i)=z_i, \gamma'(z_i)=1$. We will apply this below to 
the case when the points are $0, \infty$, and we will still call the group ${\rm B}_0$ by abuse of notation. 
\end{remark}

\subsection{Generalization: Case 2}
\label{sec:c2}
Now we relax our assumption and consider smooth $f$ on $\RR$. At $u=\infty$ we demand that $f$ and its derivative vanish. 
At $u=0$ we demand $f(0)=0$ but not necessarily $f'(0)=0$. Then $\rho={\rm Exp}(f)$ leaves $u=\infty$ invariant and is $C^1$ 
there\footnote{Here as well as everywhere, the Caley transform is understood to identify $\RR \cup \infty$ and $\SS$.}.
At $u=0$, we have $\rho(0)=0$ but not necessarily $\rho'(0)=1$, and so $\rho_\pm$ as in \eqref{eq:rpm}
are only continuous but not differentiable at $0$. As a consequence, the 
split $V(\rho_+\rho_-)=e^{icB(\rho_+,\rho_-)} V(\rho_+)V(\rho_-)$ (FALSE) cannot be defined in this case 
since even the Bott cocycle is undefinined (infinite) for diffeomorphisms of the class ${\rm Diff}_+^{{\rm ps},0}$
This can easily be seen from eq. \eqref{eq:Bott}, because in the case at hand both $\rho_2'(u), (\rho_1 \circ \rho_2)'(u)$ have a finite discontinuity at $u=0$, thus the integrand of \eqref{eq:Bott} behaves qualitatively like the product of a $\delta$-function and 
a step function. Such a product is not well-defined, although it can be extended, non-uniquely, as a distribution. In particular, therefore,
the phase $B(\rho_+ , \delta_t \circ \rho_+^{-1} \circ \delta_t^{-1})$ appearing in \eqref{eq:Connes1} is undefined. 

Nevertheless we can try 
to define the Connes cocycle $(D\omega : D\omega')_{it}$ by a modification of \eqref{eq:Connes1}, making the ansatz
\ben
\label{eq:Connes2}
(D\omega : D\omega')_{it} = 
V(\rho_+ \circ \delta_t \circ \rho_+^{-1} \circ \delta_t^{-1}) e^{ia(t)} =: u(t) ,  
\een 
 where $a(t) \in \RR$ is to be determined. Note that the right side is well defined, since even though  $\rho_+$ is only in ${\rm Diff}_+^{{\rm ps},0}$, the 
 combination $\rho_+ \circ \delta_t \circ \rho_+^{-1} \circ \delta_t^{-1}$ is in ${\rm B}_0 \subset {\rm Diff}_+^{{\rm ps},1}$ (cf. defn. \ref{def:1}), so 
 $V(\rho_+ \circ \delta_t \circ \rho_+^{-1} \circ \delta_t^{-1})$ is 
 well defined by lemma \ref{lem:0} and the following remark, 
 which follows easily from the fact that $\delta_t(u)=e^{-2\pi t}u$ are dilations in the light ray picture. The importance of $a(t)$ is that, in view of \eqref{Srel} and lemma \ref{lem:0}, the relative entropy is given by
 \ben
 \label{eq:Salpha'}
S(\omega | \omega') = -a'(0). 
\een
 The derivative, defined in the usual way as the limit $a'(0)=\lim_{t\to 0} t^{-1}[a(t)-a(0)]$, must exist in view of thm. 5.7 of \cite{Petz}, because we have already argued that $S(\omega | \omega')$ is finite.
 
 Now we would like to impose the Connes cocycle condition 
 $u(t+s)=u(t) \sigma_\omega^{t} (u(t))$. This is equivalent to 
 \ben
 \label{eq:cocy}
 a(t)+a(s) - a(t+s) = b(t,s)
 \een
 where 
 \ben
 \label{eq:beta}
 b(t,s) = c B\left([\rho_+, \delta_t], \delta_t \circ [\rho_+, \delta_s] \circ \delta_t^{-1} \right) , 
 \een
 using the usual notation $[g_1,g_2]=g_1 g_2 g^{-1}_1 g_2^{-1}$ for the commutator in a group (here the diffeomorphism group).
 This is well defined since $[\rho_+, \delta_t] \in {\rm B}_0$.
 We can rewrite this condition as $\b a= b$, where $\b$ is the cocycle operator on the additive group $(\RR,+)$.
 Since there are non non-trivial cocycles on this group, solutions $a$ can be found provided that $\b b = 0$. 
 We state this as a 
 \begin{lemma}
 Viewed as a 2-cycle on the additive group $(\RR,+)$, $b$ verifies $\b b = 0$.
 \end{lemma}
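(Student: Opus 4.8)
The plan is to derive $\b b = 0$ from a single structural input -- that $B$ is a genuine group $2$-cocycle on ${\rm B}_0$ (equivalently, that the phase-adjusted implementers $V(\gamma)$ of lemma \ref{lem:0} compose associatively via \eqref{eq:comp}) -- together with the invariance of the Bott cocycle under conjugation by dilations. Throughout I write $g_t := [\rho_+,\delta_t]$, so that $b(t,s) = cB(g_t,\, \delta_t\circ g_s\circ\delta_t^{-1})$; by the discussion following \eqref{eq:Connes2} every argument of $B$ occurring below lies in ${\rm B}_0$, so all the quantities are well defined even though $\rho_+$ itself is not.

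First I would record the purely group-theoretic identity that makes the construction work. A direct computation with $g_t = \rho_+\circ\delta_t\circ\rho_+^{-1}\circ\delta_t^{-1}$ and $\delta_t\circ\delta_s=\delta_{t+s}$ gives
\ben
g_t \circ (\delta_t \circ g_s \circ \delta_t^{-1}) = g_{t+s},
\een
i.e. $t\mapsto(g_t,\delta_t)$ is a one-parameter subgroup of the semidirect product in which $\RR$ acts on the diffeomorphism group by conjugation with $\delta_t$. Setting $\gamma_1 = g_t$, $\gamma_2 = \delta_t\circ g_s\circ\delta_t^{-1}$, $\gamma_3 = \delta_{t+s}\circ g_r\circ\delta_{t+s}^{-1}$, the same elementary manipulations give $\gamma_1\circ\gamma_2 = g_{t+s}$ and $\gamma_2\circ\gamma_3 = \delta_t\circ g_{s+r}\circ\delta_t^{-1}$.

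Next I would feed these three elements into the cocycle identity $\b B(\gamma_1,\gamma_2,\gamma_3)=0$, which holds because $\gamma\mapsto V(\gamma)$ is a projective representation of ${\rm B}_0$ (lemma \ref{lem:0}); with the trivial action this reads $B(\gamma_2,\gamma_3) - B(\gamma_1\gamma_2,\gamma_3) + B(\gamma_1,\gamma_2\gamma_3) - B(\gamma_1,\gamma_2) = 0$. By the identities of the previous step the last three terms equal exactly $\tfrac1c b(t+s,r)$, $\tfrac1c b(t,s+r)$ and $\tfrac1c b(t,s)$, so that
\ben
\b b(t,s,r) = b(s,r) - c\,B(\gamma_2,\gamma_3),
\een
and the claim reduces to $c\,B(\gamma_2,\gamma_3) = b(s,r)$. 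Writing $\gamma_2 = \delta_t\circ g_s\circ\delta_t^{-1}$ and $\gamma_3 = \delta_t\circ(\delta_s\circ g_r\circ\delta_s^{-1})\circ\delta_t^{-1}$, this is precisely the assertion that the Bott cocycle is invariant under simultaneous conjugation by the dilation $\delta_t$, namely $B(\delta_t\circ\alpha\circ\delta_t^{-1},\,\delta_t\circ\beta\circ\delta_t^{-1}) = B(\alpha,\beta)$ with $\alpha=g_s$, $\beta=\delta_s\circ g_r\circ\delta_s^{-1}$.

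The main obstacle is this last invariance. I expect it to follow from the fact that dilations are M\"obius transformations and hence have vanishing Schwarzian derivative, $\S\delta_t\equiv 0$: by the transformation law \eqref{rel1} the unitaries $V(\delta_t)=\Delta^{it}$ then implement dilations on the stress tensor with no central (c-number) correction, so that conjugation by $\Delta^{it}$ acts geometrically on the representers $V(\gamma)$ without an extra phase, which forces $B$ to be preserved on the nose; at the infinitesimal level this is the standard $\mathfrak{sl}(2,\RR)$-invariance of the Virasoro/Gelfand--Fuks cocycle underlying $B$. The care required is that $\alpha,\beta$ here are only piecewise smooth and merely $C^1$ at the origin, so I would verify the invariance directly from the explicit integral \eqref{eq:Bott}, using that its integrand is the product of a continuous function with a function having only finitely many jump discontinuities (as already noted when checking that $B$ is well defined on ${\rm B}_0$) and that the change of variables $u\mapsto e^{-2\pi t}u$ leaves the integral unchanged precisely because the Schwarzian contribution of $\delta_t$ vanishes. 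An alternative, more structural route avoiding the explicit computation is to approximate the Case-2 map $\rho_+$ by Case-1 diffeomorphisms $\rho_+^{(n)}\in{\rm B}_0$: for these $V(\rho_+^{(n)})$ genuinely exists, so $u_n(t)=V(\rho_+^{(n)})\Delta^{it}V(\rho_+^{(n)})^*\Delta^{-it}$ is an honest Connes cocycle and the associated $b_n$ obeys $\b b_n=0$ automatically, after which one passes to the limit using continuity of $B$ on ${\rm B}_0$. The delicate point in that route is that the commutators $g_t^{(n)}=[\rho_+^{(n)},\delta_t]$ must converge to $g_t$ in the Fr\'echet topology of ${\rm B}_0$ although $\rho_+^{(n)}$ does not converge in $C^1$ at the origin -- which is possible exactly because the derivative jump of $\rho_+$ cancels in the commutator, leaving $g_t\in{\rm B}_0$.
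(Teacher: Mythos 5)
Your proposal is correct and follows essentially the same route as the paper: both prove $\b b=0$ by using the commutator identities $[\rho_+,\delta_{t+s}]=[\rho_+,\delta_t]\circ\delta_t\circ[\rho_+,\delta_s]\circ\delta_t^{-1}$ to rewrite the four terms of $\b b(t,s,r)$ as an instance of the group-cocycle identity $\b B(\gamma_1,\gamma_2,\gamma_3)=0$ with exactly your $\gamma_1=[\rho_+,\delta_t]$, $\gamma_2=\delta_t\circ[\rho_+,\delta_s]\circ\delta_t^{-1}$, $\gamma_3=\delta_{s+t}\circ[\rho_+,\delta_r]\circ\delta_{s+t}^{-1}$. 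The only difference is that you isolate and justify the residual step $B(\delta_t\circ\alpha\circ\delta_t^{-1},\delta_t\circ\beta\circ\delta_t^{-1})=B(\alpha,\beta)$ (dilation invariance of the Bott cocycle), which the paper's computation uses tacitly when it replaces $b(s,r)$ by $B(\gamma_2,\gamma_3)$ in its final line; your explicit verification of this via \eqref{eq:Bott} is sound and a welcome clarification rather than a deviation.
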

 \begin{proof}
 We compute
 (omitting the factor $c$ from the second line onwards):
 \ben
 \begin{split}
 &\b b(t,s,r) \\
 &= b(t,s) - b(s+t,r)+ b(t,r+s)-b(s,r)\\
 &=B([\rho_+, \delta_t],\delta_t \circ [\rho_+, \delta_s] \circ \delta_t^{-1}) 
  - B([\rho_+, \delta_{s+t}],\delta_{s+t} \circ [\rho_+, \delta_r] \circ \delta_{s+t}^{-1})\\
 &+B([\rho_+, \delta_t],\delta_t \circ [\rho_+, \delta_{s}] \circ \delta_t^{-1}) 
 - B([\rho_+, \delta_s],\delta_s \circ [\rho_+, \delta_r] \circ \delta_s^{-1}) \\
 &=B([\rho_+, \delta_t],\delta_t \circ [\rho_+, \delta_s] \circ \delta_t^{-1}) 
- B([\rho_+, \delta_{t}] \circ \delta_t \circ [\rho_+, \delta_{s}] \circ \delta_t^{-1},\delta_{s+t} \circ [\rho_+, \delta_r] \circ \delta_{s+t}^{-1})\\
 &+B([\rho_+, \delta_t],\delta_t \circ [\rho_+, \delta_{s}] \circ [\rho_+, \delta_{r}] \circ \delta_{t+s}^{-1}) 
 - B([\rho_+, \delta_s],\delta_s \circ [\rho_+, \delta_r] \circ \delta_s^{-1}) \\
 &=B([\rho_+, \delta_t],\delta_t \circ [\rho_+, \delta_s] \circ \delta_t^{-1}) 
- B([\rho_+, \delta_{t}] \circ \delta_t \circ [\rho_+, \delta_{s}] \circ \delta_t^{-1},\delta_{s+t} \circ [\rho_+, \delta_r] \circ \delta_{s+t}^{-1})\\
 &+B([\rho_+, \delta_t],\delta_t \circ [\rho_+, \delta_{s}] \circ \delta_t^{-1} \circ \delta_{s+t} \circ [\rho_+, \delta_{r}] \circ \delta_{s+t}^{-1}) 
 - B(\delta_t\circ [\rho_+, \delta_s] \circ \delta_t^{-1},\delta_{s+t} \circ [\rho_+, \delta_r] \circ \delta_{s+t}^{-1}) \\
 &=0
 \end{split}
 \een 
 using $\b B(g_1,g_2,g_3)=0$ in the last step.
 \end{proof}
 
 Every solution to \eqref{eq:cocy}, which we now know exists, is unique up to a 1-cocycle on $(\RR,+)$, i.e. up to $a(t) \to a(t)+m t$ for some $m \in \RR$.
 
 The cocycle $u(t)$ in eq. \eqref{eq:Connes2}, and $a(t)$, depend on $\rho$ and we emphasize this by writing $u_\rho(t), a_\rho(t)$ etc. 
 Let $\sigma \in {\rm B}_0$, acting nontrivially only inside $(0,1)$ (so $\sigma(0)=0, \sigma'(0)=1$).
 We now evaluate $u_{\sigma \circ \rho}(t)$ in two different ways. The first way is to use the definition \eqref{eq:Connes2} and the composition law (phases)
 for the representation $V$ of ${\rm B}_0$, cf. lemma \ref{lem:0}. We get:
 \ben
 \begin{split}
 &u_{\sigma \circ \rho}(t)\\ 
 &= V([\sigma_+ \circ \rho_+, \delta_t]) e^{ia_{\sigma \circ \rho}(t)}\\
 &=V(\sigma_+ \circ [\rho_+, \delta_t] \circ \delta_t \circ \sigma_+^{-1} \circ \delta_t^{-1}) e^{ia_{\sigma \circ \rho}(t)}\\
 &=V(\sigma_+) V([\rho_+, \delta_t] ) V(\delta_t \circ \sigma_+^{-1} \circ \delta_t^{-1}) e^{ia_{\sigma \circ \rho}(t)+ic(
 B(\sigma_+,[\rho_+, \delta_t] \circ \delta_t \circ \sigma_+^{-1} \circ \delta_t^{-1} ) + B([\rho_+, \delta_t] ,\delta_t \circ \sigma_+^{-1} \circ \delta_t^{-1} )
 )}\\
 &=V(\sigma_+) V([\rho_+, \delta_t] ) \Delta^{it}V(\sigma_+)^* \Delta^{-it} e^{ia_{\sigma \circ \rho}(t)+ic(
 B(\sigma_+,[\rho_+, \delta_t] \circ \delta_t \circ \sigma_+^{-1} \circ \delta_t^{-1} ) + B([\rho_+, \delta_t] ,\delta_t \circ \sigma_+^{-1} \circ \delta_t^{-1} )
 )}.
 \end{split}
 \een
 We note that all terms are well defined since $\sigma_+, [\rho_+, \delta_t] \in {\rm B}_0$. The second way uses the general 
 formula for the Connes cocycle: $(D\omega:D\omega'\circ {\rm Ad}(U^*))_{it}=U(D\omega:D\omega')_{it} \sigma^t_\omega(U^*)$ for $U \in \gM$, 
 which one can prove e.g. using the chain rule. Now, we have by construction $\omega_{\sigma \rho}=\omega \circ  {\rm Ad}(V(\sigma \circ \rho)^*)=
 \omega \circ  {\rm Ad}(V(\rho)^*) \circ {\rm Ad}(V(\sigma)^*) = \omega_\rho \circ {\rm Ad}(V(\sigma_+)^*)$ and this gives
 \ben
  u_{\sigma \circ \rho}(t) =V(\sigma_+) V([\rho_+, \delta_t] ) \Delta^{it}V(\sigma_+)^* \Delta^{-it} e^{ia_\rho(t)}. 
 \een
We have shown:

\begin{lemma}
We have
 \ben
 \label{eq:alphacon}
 a_{\sigma \circ \rho}(t) +c(
 B(\sigma_+,[\rho_+, \delta_t] \circ \delta_t \circ \sigma_+^{-1} \circ \delta_t^{-1} ) + B([\rho_+, \delta_t] ,\delta_t \circ \sigma_+^{-1} \circ \delta_t^{-1} )) = 
 a_\rho(t)
 \een
for all $\sigma \in {\rm B}_0$, such that $\sigma(0)=0, \sigma'(0)=1$ in particular. 
\end{lemma}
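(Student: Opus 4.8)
The plan is to compute the proposed Connes cocycle $u_{\sigma \circ \rho}(t)$ of \eqref{eq:Connes2} in two independent ways and then equate the results. Since both expressions will turn out to share a common unitary factor, cancelling it leaves exactly the scalar relation \eqref{eq:alphacon} between the phases $a_{\sigma \circ \rho}(t)$ and $a_\rho(t)$.

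First I would expand $u_{\sigma \circ \rho}(t)$ straight from the ansatz \eqref{eq:Connes2}. The diffeomorphism entering is the group commutator $[\sigma_+ \circ \rho_+, \delta_t]$, and the one algebraic identity that drives everything is
\[
[\sigma_+ \circ \rho_+, \delta_t] = \sigma_+ \circ [\rho_+, \delta_t] \circ \delta_t \circ \sigma_+^{-1} \circ \delta_t^{-1},
\]
each factor of which lies in ${\rm B}_0$ (using $(\sigma_+ \circ \rho_+)^{-1} = \rho_+^{-1} \circ \sigma_+^{-1}$ and $\delta_t^{-1} \circ \delta_t = {\rm id}$). Applying the projective composition law \eqref{eq:comp}, now available for ${\rm B}_0$ by lemma \ref{lem:0}, to split $V$ of this threefold product into $V(\sigma_+)\,V([\rho_+,\delta_t])\,V(\delta_t \circ \sigma_+^{-1} \circ \delta_t^{-1})$ produces precisely the two Bott phases $B(\sigma_+, [\rho_+,\delta_t]\circ\delta_t\circ\sigma_+^{-1}\circ\delta_t^{-1})$ and $B([\rho_+,\delta_t], \delta_t \circ \sigma_+^{-1} \circ \delta_t^{-1})$ that appear in the claim. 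Finally, since $\delta_t$ is the dilation implemented by $\Delta^{it}$ (Bisognano--Wichmann), I would rewrite $V(\delta_t \circ \sigma_+^{-1} \circ \delta_t^{-1}) = \Delta^{it} V(\sigma_+)^* \Delta^{-it}$, which requires no extra phase because $V$ restricts to a genuine representation on the one-parameter dilation group.

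For the second computation I would invoke the perturbation formula for Connes cocycles recorded above, $(D\omega : D\omega' \circ {\rm Ad}(U^*))_{it} = U\,(D\omega : D\omega')_{it}\,\sigma^t_\omega(U^*)$, with $U = V(\sigma_+)$. Because $\sigma$ is supported in $(0,1)$ we have $\sigma = \sigma_+ \in \gM_{\rm Vir}(\RR_+)$ and $\omega_{\sigma \circ \rho} = \omega_\rho \circ {\rm Ad}(V(\sigma_+)^*)$, so the formula gives $u_{\sigma \circ \rho}(t) = V(\sigma_+)\,u_\rho(t)\,\Delta^{it} V(\sigma_+)^* \Delta^{-it}$. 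Substituting $u_\rho(t) = V([\rho_+,\delta_t])\,e^{ia_\rho(t)}$ and commuting the scalar phase through the unitaries then yields the very same prefactor $V(\sigma_+)\,V([\rho_+,\delta_t])\,\Delta^{it} V(\sigma_+)^* \Delta^{-it}$, now multiplied by $e^{ia_\rho(t)}$.

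Equating the two expressions, the common unitary cancels and only the phases survive, which is the asserted identity \eqref{eq:alphacon}. The step I expect to be the main obstacle is the phase bookkeeping: one must verify that the two computations really yield identical operator parts, and in particular that conjugation by $\Delta^{it}$ introduces no stray Bott cocycle --- this rests on $V$ being a true, not merely projective, representation on the M\"obius subgroup containing the dilations --- and that the threefold split orders the two $B$-terms exactly as stated, with the correct signs dictated by the convention in \eqref{eq:comp}. Everything else is the group-theoretic rearrangement displayed above together with the already-established well-definedness of $V$ on ${\rm B}_0$ from lemma \ref{lem:0}.
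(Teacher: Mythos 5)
Your proposal is correct and takes essentially the same route as the paper's own proof: both evaluate $u_{\sigma\circ\rho}(t)$ once from the ansatz \eqref{eq:Connes2} via the rearrangement $[\sigma_+\circ\rho_+,\delta_t]=\sigma_+\circ[\rho_+,\delta_t]\circ\delta_t\circ\sigma_+^{-1}\circ\delta_t^{-1}$ and the projective composition law for $V$ on ${\rm B}_0$ (lemma \ref{lem:0}), and once via the perturbation formula $(D\omega:D\omega'\circ{\rm Ad}(U^*))_{it}=U(D\omega:D\omega')_{it}\sigma^t_\omega(U^*)$ with $U=V(\sigma_+)$ and $\omega_{\sigma\circ\rho}=\omega_\rho\circ{\rm Ad}(V(\sigma_+)^*)$, then equate the two expressions and cancel the common unitary. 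The step you flag as delicate, $V(\delta_t\circ\sigma_+^{-1}\circ\delta_t^{-1})=\Delta^{it}V(\sigma_+)^*\Delta^{-it}$ without extra phase, is used in exactly the same (implicit) way in the paper.
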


To summarize what we know at this point: the Connes coycle is given by eq. \eqref{eq:Connes1}, 
with $a(t)$ restricted by eqs. \eqref{eq:cocy}, \eqref{eq:beta} and \eqref{eq:alphacon}.

We now define an $\alpha(t)$ solving \eqref{eq:cocy} with right side \eqref{eq:beta}. Define first ($r=\rho'(0)$)
\ben
\rho_n(u) = 
\begin{cases}
u & \text{if $u < 0 $,}\\
(n \log r)^{-1}(e^{n(\log r) u}-1) & \text{if $0 \le u \le 1/n$.}\\
\rho(u-1/n) - (r-1)/(n \log r) & \text{if $u > 1/n$.}
\end{cases}
\een 
Then $\rho_n \in {\rm Diff}_+^{1,{\rm ps}}$ and $\rho_n \to \rho_+$ pointwise as $n \to \infty$ by construction. Hence, comparing \eqref{eq:Connes2} with \eqref{eq:Connes1}, it is tempting to define 
\ben
\label{eq:alphadef}
a_\rho(t) := \lim_{n \to \infty} cB(\rho_n , \delta_t \circ \rho_n^{-1} \circ \delta_t^{-1}). 
\een
Since $\rho_n \in {\rm Diff}_+^{1,{\rm ps}}$, each term $a_{\rho_n}(t)$ under the limit is well-defined and using the explicit formula for the Bott-cocycle, we can see that also the limit exits. We now claim that the cocycle relation \eqref{eq:cocy} is satisfied. It is clear since 
\eqref{eq:Connes1} is a Connes-cocycle that the coboundary operator $\delta$ applied to each term $a_{\rho_n}(t)$
under the limit must yield $b_{\rho_n}(s,t)$ given by \eqref{eq:beta}. But the latter is easily seen to have limit $b_{\rho}(s,t)$ (using again the explicit form of the Bott cocycle), demonstrating that $a_\rho(t)$ is one solution to \eqref{eq:cocy}.

Of course, $a_\rho(t)$ as defined by \eqref{eq:alphadef} is only one among the infinitely many solutions to \eqref{eq:cocy} all differing 
by $m_\rho t$ for some $m_\rho \in \RR$, and we do not know a priori which $m_\rho t$, if any, we must add to $a_\rho(t)$ 
in order for \eqref{eq:Connes2} to represent the Connes cocycle. We could at this stage in principle 
fix this ambiguity by the KMS-condition for $u(t)$ in \eqref{eq:Connes2}, i.e. 
${\rm anal.cont.}_{t\to i} \omega(u(t))=1$. But while this must be possible, it does not seem very practical, because the analytic continuation, which we know must exist, is difficult to calculate. 
We therefore resort to other considerations which we shall outline now.

The key observation is that $a_\rho(t)$ 
must satisfy by itself the constraint \eqref{eq:alphacon}. This can be seen by approximating 
$\rho_+$ by $\rho_n$. For the approximating sequence, \eqref{eq:alphacon} must hold true by a direct computation using the 
cocycle condition for $B$, and the limit exists for both sides. Thus, we have learned that the ambiguity $m_\rho t$ 
satisfies
\ben
m_{\sigma \circ \rho} = m_\rho
\een
for all $\sigma \in {\rm B}_0$ -- such that $\sigma(0)=0, \sigma'(0)=1$ in particular -- and therefore $m_\rho$, which must be 
added as $a_\rho(t) \to a_\rho(t)+m_\rho t$  in order for \eqref{eq:Connes2} to represent the Connes cocycle, can be 
a function of $r=\rho'(0)$ only.

Now, taking a derivative with respect to $t$ at $t=0$, one can see by a computation that 
\ben
\begin{split}
- a_\rho'(0) &= -m_\rho + \frac{c}{24}  \lim_{n \to \infty}  \int_{0^+}^\infty u {\bf S}\rho_n(u) \d u \\
&= -m_\rho + \frac{c}{24} (\log r)^2 + \frac{c}{24}  \int_{0^+}^\infty u \left( \frac{\d}{\d u} \log \rho'(u) \right)^2 \d u .
\end{split}
\een
Since $S(\omega | \omega') = -a'(0)$ and since $m_\rho$ is a function of $r=\rho'(0)$ only, 
we know at this stage that 
\ben
\label{eq:Snu}
S(\omega | \omega') = \nu(r)  + \frac{c}{24}  \int_{0^+}^\infty u \left( \frac{\d}{\d u} \log \rho'(u) \right)^2 \d u
\een
for some function $\nu(r)$ that we must still determine.

The major advantage of expression \eqref{eq:Snu} is that we can test it with any $\rho$ satisfying the assumptions of this section, 
with $\rho(0)=0$ and a given $r=\rho'(0^+)$,
against known general properties of the relative entropy. For instance, we know $S(\omega | \omega') \ge 0$. To see what we 
can learn from this, consider the sequence
\ben
\sigma_n(u) = 
\begin{cases}
\gamma_n(u) & \text{if $u < 0 $,}\\
\frac{\log n}{ \log(n/r)} [(u+1/n)^{\log(n/r)/\log n} - (1/n)^{\log(n/r)/\log n} ] & \text{if $0 \le u \le 1- 1/n$,}\\
u + k_n & \text{if $u > 1-1/n$,}
\end{cases}
\een
where the constant $k_n$ is chosen so that $\sigma_n$ is $C^1$ at $u=1-1/n$, and where 
$\gamma_n$ is a smooth diffeomorphism chosen such that $\sigma_n$ is smooth at $u=0$ and such that
$\gamma_n(u)=u$ for $u<-1$. $\sigma_n$ is then compactly supported, $C^1$ including at $u=\infty$, and piecewise smooth. 
At $u=0$ we have $\sigma_n(0) = 0, \sigma_n'(0)=r$. Thus we can form the corresponding 
coherent states $\omega'_n(A)=\omega(V(\sigma_n)^* A V(\sigma_n))$. If we mollify the non-smooth behavior at $u=1-1/n$, 
then we are allowed to apply \eqref{eq:Snu} to compute the entropy $S(\omega | \omega'_n) \ge 0$. The choice of 
$\gamma_n$ is immaterial for the result as it affects only the negative real axis, and the mollifier can be taken 
away without problem using that the integral in \eqref{eq:Snu1} converges absolutely and the dominated convergence theorem. 
Thereby we obtain
\ben
\begin{split}
\label{eq:Snu1}
0 &\le \nu(r)  + \frac{c}{24}  \int_{0^+}^\infty u \left( \frac{\d}{\d u} \log \sigma_n'(u) \right)^2 \d u\\
&=\nu(r)  +  \frac{c(\log r)^2}{24(\log n)^2}  \int_{0}^{1-1/n} \frac{u \d u }{(u+1/n)^2}  \\
&<\nu(r) +  \frac{c(\log r)^2}{24 \log n} . 
\end{split}
\een
But $n$ can be made arbitrarily large. So we conclude our first inequality: $\nu(r) \ge 0$. 

We next obtain a bound from above. We consider 
\ben
\zeta_n(u) = 
\begin{cases}
u & \text{if $u < -1/n $,}\\
-1/n+\int_{-1/n}^u \exp[(\log r)(nx+1)^{1/n}] \d x & \text{if $-1/n \le u < 0$,}\\
\sigma_n(u) + m_n & \text{if $u \ge 0$,}
\end{cases}
\een
with a constant $m_n$ to be chosen so that $\zeta_n \in {\rm Diff}_+^{1,{\rm ps}}$. 
Now we form the corresponding coherent state $\omega'_n$ associated with $\zeta_n$ and we consider the relative entropy $S_{(-1/n,\infty)}(\omega | \omega'_n)$ relative to the algebra $\gM_{\rm Vir}(-1/n,\infty) \supset \gM_{\rm Vir}(0,\infty)$. By the monotonicity 
property of the relative entropy, we have $S_{(-1/n,\infty)}(\omega | \omega'_n) \ge S_{(0,\infty)}(\omega | \omega'_n)$. If we mollify $\zeta_n$ at $u=0$,  $S_{(-1/n,\infty)}(\omega | \omega'_n)$ can be computed using formula \eqref{eq:S0}, since $\zeta_n'(-1/n)=1, \zeta_n(-1/n)=0$, while $S_{(0,\infty)}(\omega | \omega'_n)$ can be computed using \eqref{eq:Snu}. Taking the mollifier away, 
we obtain the inequality
\ben
\nu(r) + \frac{c}{24}  \int_{0^+}^\infty u \left( \frac{\d}{\d u} \log \zeta_n'(u) \right)^2 \d u \le \frac{c}{24}  \int_{-1/n^+}^\infty (u+1/n) \left( \frac{\d}{\d u} \log \zeta_n'(u) \right)^2 \d u .  
\een
Evaluating the integrals is easy and leads to the bound
\ben
\nu(r) \le \frac{c(\log r)^2}{24} \left( \frac{n^{2/n-1}}{2} + \frac{1-n}{n(\log n)^2} \right) . 
\een
Taking $n \to \infty$, we obtain the second inequality: $\nu(r) \le 0$, therefore $\nu(r)=0$ in \eqref{eq:Snu}. Our final result \eqref{eq:Snu}can be again be expressed in terms fo the function $\varphi(s) = \log \rho(e^s)$ by a simple change of variables. Then we obtain the following theorem:

\begin{thm}
\label{thm:2}
The conclusions of thm. \ref{thm:1} remain true for 
$f \in C^\infty_{0,\RR}(\RR)$, with $f(0)=0$ but possibly, $f'(0) \neq 0$. 
\end{thm}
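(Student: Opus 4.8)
My plan is to read the statement off as the endpoint of the analysis already assembled in this subsection, so that the proof is a matter of collecting established identities rather than starting a fresh argument. By the time the theorem is reached we have formula \eqref{eq:Snu}, which writes the relative entropy as $\nu(r)$ plus the manifestly non-negative integral $\frac{c}{24}\int_{0^+}^\infty u\,(\frac{\d}{\d u}\log\rho'(u))^2\,\d u$ with $r=\rho'(0^+)$, and the two sandwiching inequalities have forced $\nu(r)=0$. The first thing I would do is simply substitute $\nu\equiv 0$, leaving $S(\omega | \omega')=\frac{c}{24}\int_{0^+}^\infty u\,(\frac{\d}{\d u}\log\rho'(u))^2\,\d u$, i.e. precisely the expression \eqref{eq:S0} that already governed the case $f'(0)=0$ in thm.~\ref{thm:1}.

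The only remaining step is the change of variables carried out between \eqref{eq:Su} and \eqref{eq:thm1}. I would set $u=e^s$ and $\varphi(s)=\log\rho(e^s)$, use $\d u/u=\d s$ together with the Schwarzian chain rule for $\rho=\exp\circ\varphi\circ\log$, which gives $u^2\S\rho(u)=\S\varphi(s)+\tfrac12-\tfrac12\varphi'(s)^2$, and then substitute $\S\varphi=(\varphi''/\varphi')'-\tfrac12(\varphi''/\varphi')^2$. Carrying this through brings the integrand into the Schwarzian-action density of \eqref{eq:main}, up to the total-derivative term $\int_\RR(\varphi''/\varphi')'\,\d s$, and collecting the pieces yields $S(\omega | \omega')=c\,I_{\rm Schwarz}(\varphi)$, which is exactly the conclusion of thm.~\ref{thm:1}.

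The point that genuinely needs care — and the only place where the present case differs from thm.~\ref{thm:1} — is the boundary behaviour at $s=-\infty$. Since now $\rho'(0^+)=r\neq1$, the local expansion $\rho(u)=ru+O(u^2)$ at the fixed point $u=0$ gives $\varphi(s)-s\to\log r$ rather than $0$ as $s\to-\infty$, so one cannot simply invoke the earlier remark that $\varphi$ is the identity near $u=0$. What I would verify explicitly is that $\varphi'(s)\to1$ and $\varphi''(s)/\varphi'(s)\to0$ at both ends — the latter because $\varphi''$ decays like $e^{s}$ near $-\infty$ and because $\rho(u)=u$ beyond the support of $f$ — so that the total-derivative term drops and the surviving integrals converge absolutely. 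I expect the main obstacle to lie not in this final assembly but upstream, in the justification of \eqref{eq:Snu} and of $\nu(r)=0$: the real substance is pinning down the $r$-dependent constant by sandwiching the entropy between a lower bound from positivity (the family $\sigma_n$ whose bulk integral tends to zero) and an upper bound from monotonicity under the enlargement $\gM_{\rm Vir}(-1/n,\infty)\supset\gM_{\rm Vir}(0,\infty)$ (the family $\zeta_n$), where the delicate steps are the mollifications at the non-smooth junctions and the dominated-convergence arguments that let \eqref{eq:Snu} and \eqref{eq:S0} be evaluated on these approximating diffeomorphisms.
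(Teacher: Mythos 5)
Your proposal tracks the paper's own route step for step: in the paper, the proof of thm.~\ref{thm:2} \emph{is} the cumulative Case-2 analysis of sec.~\ref{sec:c2} --- the ansatz \eqref{eq:Connes2} for the Connes cocycle, the identity $S(\omega|\omega')=-a'(0)$, the reduction to \eqref{eq:Snu} with an undetermined function $\nu$ of $r=\rho'(0^+)$, and the sandwich $\nu(r)\ge 0$ (positivity, family $\sigma_n$) and $\nu(r)\le 0$ (monotonicity under $\gM_{\rm Vir}(-1/n,\infty)\supset\gM_{\rm Vir}(0,\infty)$, family $\zeta_n$) --- so you have correctly located where the substance lies, and your boundary analysis at $s=-\infty$ is correct as far as it goes: indeed $\varphi(s)-s\to\log r$, $\varphi'\to 1$, $\varphi''/\varphi'=O(e^s)$, so the total derivative $(\varphi''/\varphi')'$ contributes no boundary term.

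There is, however, a genuine gap in the one step you carry out yourself, and it is not the total-derivative term you checked. The computation between \eqref{eq:Su} and \eqref{eq:thm1} starts from $-\frac{c}{12}\int u\,\S\rho\,\d u$, whereas after substituting $\nu\equiv 0$ into \eqref{eq:Snu} you are holding \eqref{eq:S0}; these two expressions are no longer interchangeable when $r\neq 1$. Writing $w=\log\rho'$, one has $\S\rho=w''-\tfrac12 (w')^2$ and, integrating by parts, $\int_{0^+}^\infty u\,w''\,\d u=-\int_{0^+}^\infty w'\,\d u=\log r$, so \eqref{eq:Su} and \eqref{eq:S0} differ by $\frac{c}{12}\log r$, which vanishes precisely in the regime of thm.~\ref{thm:1} and not in the new one. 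Equivalently, using $uw'=\varphi'-1+\varphi''/\varphi'$ with $u=e^s$ (the cross term integrates to zero because $\varphi'\to1$ and $\log\varphi'\to0$ at both ends),
\[
\frac{c}{24}\int_{0^+}^\infty u\Big(\frac{\d}{\d u}\log\rho'(u)\Big)^{\!2}\d u
\;=\;\frac{c}{24}\int_\RR\Big[\big(\varphi'(s)-1\big)^2+\Big(\frac{\varphi''(s)}{\varphi'(s)}\Big)^{\!2}\Big]\d s
\;=\;c\,I_{\rm Schwarz}(\varphi)+\frac{c}{12}\log r ,
\]
since $\int_\RR(\varphi'-1)\,\d s=-\log r\neq 0$ when $f'(0)\neq 0$: the dangerous term is the \emph{linear} one, not the boundary term. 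That this $\frac{c}{12}\log r$ cannot be dropped is visible from the paper's own test family $\sigma_n$: the right-hand side of \eqref{eq:Snu} tends to $\nu(r)$ along that family, while $c\,I_{\rm Schwarz}$ of the associated $\varphi_n$ tends to $-\frac{c}{12}\log r$, which is negative for $r>1$ and hence cannot be a relative entropy. In fairness, the paper's concluding sentence (``a simple change of variables'') makes the same silent identification, so your write-up is faithful to the source; but what the Case-2 analysis actually establishes is $S(\omega|\omega')=\frac{c}{24}\int_{0^+}^\infty u\,(\frac{\d}{\d u}\log\rho')^2\,\d u$, and the conclusion of thm.~\ref{thm:1} survives verbatim only if $I_{\rm Schwarz}$ is read with density $\frac{1}{24}[(\varphi'-1)^2+(\varphi''/\varphi')^2]$, which coincides with \eqref{eq:main} exactly when $\rho'(0)=1$. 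Your claim that ``the only remaining step'' is routine should therefore be replaced by this bookkeeping of the $\log\rho'(0)$ term.
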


As an application, we consider the ``solitonic'' states defined and studied 
recently in \cite{delVeccio}, see \cite{frohlich} for a discussion of such states in specific models. In our language, these are coherent states
defined by a diffeomorphism $\rho$ which are smooth everywhere except at the point at infinity $u=\infty$, 
and which act trivially in a neighborhood of $\RR_-$. The derivative $r$ at the point at infinity is an invariant of the solitonic sector. 
Performing a $GNS$-representation of  $\gM_{\rm Vir}$ for the solitonic states, one obtains a representation which is unitarily inequivalent
to the vacuum representation for $\log r \neq 0$, and in which $L_0$ is represented by a non-negative 
self-adjoint operator (so these states have in particular finite total energy).  Since 
it is possible to exchange the points $u=0$ and $u=\infty$ by the M\" obius transformation $u \to 1/u$, we can apply thm. \ref{thm:2} 
to the solitonic states and we conclude:

\begin{corollary}
\label{cor:1}
Let $\omega$ be the vacuum state on $\gM_{\rm Vir}$ (or a finite index extension) 
and $\omega'$ be a solitonic state \cite{delVeccio}. 
Then $S(\omega | \omega') = c I_{\rm Schwarz}(\varphi)$, with the same notations as in thm. \ref{thm:1}.
\end{corollary}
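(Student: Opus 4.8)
The plan is to reduce the solitonic case to Theorem \ref{thm:2} by exploiting the symmetry that exchanges the two endpoints $u=0$ and $u=\infty$ of the half-line $\RR_+$, and then to check that the Schwarzian action is unaffected by this exchange. Recall from \cite{delVeccio} that a solitonic state $\omega'$ is the coherent state associated with a diffeomorphism $\rho$ that is smooth on $\SS$ except at $u=\infty$, equals the identity in a neighbourhood of $\RR_-$ (so $\rho(u)=u$ for all $u$ below some positive threshold), and carries the one-sided derivative $r=\rho'(\infty)$ at $u=\infty$ as its sectorial invariant. Because $\rho$ fails to be $C^1$ precisely at $u=\infty$, the representer $V(\rho)$ does not act on the vacuum Hilbert space --- this is the very reason the GNS representation is inequivalent to the vacuum one --- so \eqref{Srel} cannot be applied directly, and the exchange of $0$ and $\infty$ is what cures this.

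I would therefore introduce the Möbius map $m(u)=1/u$, which interchanges the endpoints $0$ and $\infty$ while preserving $\RR_+$ setwise, and which is implemented by a vacuum-preserving transformation carrying $\gM_{\rm Vir}(\RR_+)$ onto itself. Since the relative entropy is invariant under such a symmetry, $S(\omega\mid\omega')$ equals the relative entropy of the transported coherent state built from $\bar\rho:=m\circ\rho\circ m$. A direct computation of $\bar\rho(u)=1/\rho(1/u)$ shows that $\bar\rho$ equals the identity on $\RR_-$ and for all large $u$, is smooth on $\RR_+$, satisfies $\bar\rho(0)=0$, and has boundary derivative $\bar\rho'(0^+)=r$ with a jump $\bar\rho'(0^-)=1$ at the origin. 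Thus $\bar\rho$ is exactly a half-line--supported, piecewise-smooth $C^0$ diffeomorphism of the type $\rho_+$ analysed in Section \ref{sec:c2}, with $r=\bar\rho'(0^+)\ne 1$, and Theorem \ref{thm:2} applies to give $S(\omega\mid\omega')=c\,I_{\rm Schwarz}(\bar\varphi)$, where $\bar\varphi(s)=\log\bar\rho(e^s)$ depends only on $\bar\rho|_{\RR_+}$.

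It remains to recognise $I_{\rm Schwarz}(\bar\varphi)$ as $I_{\rm Schwarz}(\varphi)$ for the solitonic data, with $\varphi(s)=\log\rho(e^s)$ as in Theorem \ref{thm:1}. The change of variables $u\mapsto 1/u$ is $s\mapsto -s$ in the logarithmic coordinate and yields the explicit relation $\bar\varphi(s)=-\varphi(-s)$; inserting this into \eqref{eq:main} and substituting $s\to -s$ in the integral shows at once that $I_{\rm Schwarz}(\bar\varphi)=I_{\rm Schwarz}(\varphi)$, so that $S(\omega\mid\omega')=c\,I_{\rm Schwarz}(\varphi)$. This is the concrete manifestation of the Möbius invariance of the Schwarzian action, which at the level of the Schwarzian derivative follows from $\S m=0$ together with the composition law $\S(H\circ G)=(G')^2(\S H)\circ G+\S G$ used in the computation leading to \eqref{eq:thm1}. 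The extension to a finite-index net $\gB$ is then obtained word for word from the conditional-expectation argument given after Theorem \ref{thm:1}, using $\psi=\omega\circ E$ and thm. 5.19 of \cite{Petz}.

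The step I expect to be the genuine obstacle is the first, representation-theoretic one: justifying that the relative entropy of the inequivalent solitonic state is computed correctly after transport. One must verify that $m$ --- which, being orientation-reversing on $\SS$ (it acts as $z\mapsto-\bar z$), is implemented not by a $V(\gamma)$ with $\gamma\in{\rm Diff}_+$ but by an antiunitary PCT-type operator --- genuinely fixes the vacuum and the algebra $\gM_{\rm Vir}(\RR_+)$, and that the transported state is indeed the coherent state of $\bar\rho$. A clean way to organise this is to factor $m$ as the reflection $u\mapsto -u$ composed with the orientation-preserving Möbius map $u\mapsto -1/u$ (a rotation by $\pi$ of $\SS$, implemented by a vacuum-fixing unitary $V$), invoking the reflection symmetry of the vacuum and the evident $\RR_-$-analogue of Theorem \ref{thm:2}. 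Once this transport is in place, the remaining Schwarzian bookkeeping is routine.
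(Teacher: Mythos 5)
Your proposal is correct and takes essentially the same route as the paper, whose entire argument for cor.~\ref{cor:1} is the one you give: exchange the points $u=0$ and $u=\infty$ by the M\"obius map $u \mapsto 1/u$, observe that the transported solitonic diffeomorphism is exactly of the type covered by thm.~\ref{thm:2} (smooth away from the origin, a derivative jump $r$ at $u=0^+$), and conclude. Your elaborations --- the antiunitary, PCT-type implementation of the orientation-reversing map $z \mapsto -\bar z$ via the rotation by $\pi$ composed with the vacuum reflection, the identity $\bar\varphi(s) = -\varphi(-s)$ showing invariance of $I_{\rm Schwarz}$ under the exchange, and the conditional-expectation step for finite-index extensions --- are details the paper leaves implicit, not a different method.
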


\medskip
\noindent
{\bf Acknowledgements:} Part of this work was carried out while visiting IHES, Paris. It is a pleasure to thank IHES for hospitality and financial assistance. I am grateful to A. Connes  and R. Longo for discussions, and to the Max-Planck Society for supporting the collaboration between MPI-MiS and Leipzig U., grant Proj. Bez. M.FE.A.MATN0003. 


\end{document}